\declaretheoremstyle[spaceabove=20pt, spacebelow = 20pt, postheadspace=10pt]{myDefi}
\declaretheoremstyle[spaceabove=20pt, spacebelow = 20pt, bodyfont=\itshape, postheadspace=10pt]{myTheo}
\declaretheoremstyle[spaceabove=6pt, spacebelow = 6pt, postheadspace=10pt]{myClai}
\declaretheorem[name=Definition, style=myDefi, parent=section]{dfn}
\declaretheorem[name=Theorem, style=myTheo, sibling=dfn]{thr}
\declaretheorem[name=Lemma, style=myTheo, sibling=dfn]{lmm}
\declaretheorem[name=Corollary, style=myTheo, sibling=dfn]{crl}
\declaretheorem[name=Claim, style=myClai]{clm}
\declaretheorem[name=Example, style=myClai, sibling=dfn]{xmp}
\definecolor{gruen}{rgb}{.2,.6,.2}
\definecolor{orange}{rgb}{.9,.5,0}
\definecolor{rot}{rgb}{.8,0,0}
\definecolor{rot2}{rgb}{.8,.27,.27}
\definecolor{blau}{rgb}{0,0,.7}
\definecolor{turkis}{rgb}{0,.7,.7}
\definecolor{violett}{rgb}{0.6,0.0,.9}
\definecolor{grau}{rgb}{0.47,0.47,0.47}
\definecolor{grau2}{rgb}{0.1,0.1,0.1}
\tikzset{
main edge/.style={line width=0.7pt, draw=grau2,>={Latex[width=3pt]},-to, },
main node/.style={state,fill=white,draw=grau2, line width=0.6pt, scale=0.8},
edge label/.style={midway,text=black, scale=1,draw=none}
}
\definecolor{orange}{rgb}{.9,.5,0}
\definecolor{rot}{rgb}{.8,0,0}
\definecolor{rot2}{rgb}{.8,.27,.27}
\definecolor{blau}{rgb}{0,0,.7}
\definecolor{turkis}{rgb}{0,.7,.7}
\definecolor{violett}{rgb}{0.6,0.0,.9}
\definecolor{grau}{rgb}{0.47,0.47,0.47}
\definecolor{grau2}{rgb}{0.1,0.1,0.1}
\tikzset{
main edge/.style={line width=0.7pt, draw=grau2,>={Latex[width=3pt]},-to, },
main node/.style={state,fill=white,draw=grau2, line width=0.6pt, scale=0.8},
edge label/.style={midway,text=black, scale=1,draw=none}
}
\newcommand{\wafa}{\textup{WAFA}\xspace}
\newcommand{\wfta}{\textup{WFTA}\xspace}
\newcommand{\wfa}{\textup{WFA}\xspace}
\newcommand{\pola}{\textup{PA}\xspace}
\newcommand{\wafaa}{(Q,\Sigma,\delta,P_0,\tau)}
\newcommand{\wftaa}{(Q,\Gamma,\delta,\lambda)}
\newcommand{\N}{\mathbb{N}}
\newcommand{\A}{\mathcal{A}}
\newcommand{\B}{\mathcal{B}}
\newcommand{\Se}{\mathcal{S}}
\newcommand{\pols}[1]{S\langle #1\rangle}
\newcommand{\chr}[1]{\mathbb{1}_{#1}}
\newcommand{\bhv}[1]{[ \mkern-3mu [ #1 ] \mkern-3mu ]}
\newcommand{\bhvs}[1]{[ #1 ]}
\newcommand{\ser}[2]{#1 \langle \mkern-4mu \langle #2 \rangle \mkern-4mu \rangle}
\newcommand{\con}[1]{%
{\mathop{#1}\limits^{\vbox to -0.5\ex@{\kern-\tw@\ex@
   \hbox{\scriptsize $\sim$}\vss}}}}
\newcommand{\tw}{t_w^r}
\newcommand{\tlg}{T_\Gamma}
\DeclareMathOperator{\pos}{Pos}
\DeclareMathOperator{\ra}{r}
\DeclareMathOperator{\pot}{\mathcal{P}}
\DeclareMathOperator{\rank}{Rank}
\newcommand{\lbl}[1]{\operatorname{Label}_{#1}}
\newcommand{\oset}[2]{%
  {\mathop{#2}\limits^{\vbox to 1\ex@{\kern-\tw@\ex@
   \hbox to 14\ex@{\scriptsize $#1$}\vss}}}}
\newcommand{\cons}[1]{%
  {\mathop{#1}\limits^{\vbox to -1\ex@{\kern-\tw@\ex@
   \hbox{\scriptsize $\sim$}\vss}}}}
\def\moverlay{\mathpalette\mov@rlay}
\def\mov@rlay#1#2{\leavevmode\vtop{%
   \baselineskip\z@skip \lineskiplimit-\maxdimen
   \ialign{\hfil$\m@th#1##$\hfil\cr#2\crcr}}}
\newcommand{\charfusion}[3][\mathord]{
    #1{\ifx#1\mathop\vphantom{#2}\fi
        \mathpalette\mov@rlay{#2\cr#3}
      }
    \ifx#1\mathop\expandafter\displaylimits\fi}
\let\emptyset\varnothing
\newcommand{\overbar}[1]{\mkern 1.5mu\overline{\mkern-1.5mu#1\mkern-1.5mu}\mkern 1.5mu}
\title{A Nivat Theorem for Weighted Alternating Automata over Commutative Semirings\thanks{This work was supported by Deutsche Forschungsgemeinschaft (DFG), Graduiertenkolleg 1763 (QuantLA).}}
\author{Gustav Grabolle
\institute{Institute of Computer Science, Leipzig University, 04109 Leipzig, Germany}
\email{grabolle at infromatik.uni-leipzig.de}
}
\begin{document}
\maketitle

\begin{abstract}
In this paper, we give a Nivat-like characterization for weighted alternating automata over commutative semirings (WAFA).
To this purpose we prove that weighted alternating can be characterized as the concatenation of weighted finite tree automata (WFTA) and a specific class of tree homomorphism. We show that the class of series recognized by weighted alternating automata is closed under inverses of homomorphisms, but not under homomorphisms. We give a logical characterization of weighted alternating automata, which uses weighted MSO logic for trees. Finally we investigate the strong connection between weighted alternating automata and polynomial automata. Using the corresponding result for polynomial automata, we are able to prove that the ZERONESS problem for weighted alternating automata with the rational numbers as weights is decidable.
\end{abstract}

\section{Introduction}

Non-determinism, a situation with several possible outcomes, is usually interpreted as a choice. An (existential) automaton accepts if there exists at least one successful run. Contrary to this, one can view non-determinism as an obligation. A universal automaton would accept if all possible runs are successful. While this notion of non-determinism is less prominent, it is equally natural. Allowing for the simultaneous use of existential and universal determinism leads to the concept of alternation, such as in alternating Turing machines \cite{alternation} or alternating automata on finite \cite{BRZOZOWSKI198019}, or infinite structures \cite{MULLER1987267}. States of an alternating finite automaton (AFA) are either existential, or universal. For an existential state at least one of the outgoing runs needs to be successful, for a universal state all of the outgoing runs need to be successful to make the entire run successful. It is even possible to mix both modes by assigning a propositional formula over the states to each pair of state and letter. Alternating finite automata have been known for a long time. They are more succinct than finite automata and constructions like the complement, or intersection are easy for them. Due to this, they have many uses such as a stepping stone between logics and automata \cite{DeGiacomo:2013:LTL:2540128.2540252}, or in program verification \cite{Vardi1995}.

While alternating automata recognize the same class of languages as finite automata, the situation is different in the weighted setting. A weighted finite automaton (WFA) assigns a weight to each of its transitions. The weight of a run is computed by multiplying its transition weights. Finally, the automaton assigns to each input the sum over all weights of runs corresponding to this input. By this, a weighted automaton recognizes a quantitative language a mapping from the set of words into a weight structure. Depending on the weight structure used, we may view a quantitative language, as a probability distribution over the words, as a cost or yield assignment, or as the likelihood or quantity of success for each input. To simultaneously allow for a multitude of interesting weight structures, weighted automata have been studied over arbitrary semirings \cite{Droste:2009:HWA:1667106}.

To adapt alternating automata into the weighted setting, we observe that the existence of a run in a finite automaton becomes a sum over all runs in a weighted automaton. Analogously, the demand for all runs to be successful becomes a product over all runs. More precisely, if a weighted alternating finite automaton (WAFA) is in an additive state, it will evaluate to the sum over the values of all outgoing runs. If the weighted alternating automaton is in a multiplicative state, it will evaluate to the product over the values of all outgoing runs. And again, we are able to mix both modes, this time by assigning polynomials over the states to each pair of state and letter. Weighted alternating automata over infinite words where studied in \cite{10.1007/978-3-642-03409-1_2} and in \cite{10.1007/978-3-642-24372-1_2} over finite words. While these authors focused on very specific weight structures, a more recent approach defines weighted alternating automata over arbitrary commutative semirings \cite{KOSTOLANYI20181}.

Weighted alternating automata have the same expressive power as weighted automata if and only if the semiring used is locally finite \cite{KOSTOLANYI20181}. However, for many interesting semirings such as the rational numbers, weighted alternating automata are strictly more expressive than weighted automata. While we have a fruitful framework for weighted automata, woven by results like the Nivat theorem for weighted automata \cite{droste2013weighted}, the equivalence of weighted automata and weighted rational expressions \cite{SCHUTZENBERGER1961245} and weighted restricted MSO logic \cite{DROSTE200769}, or the decidability of equality due to minimization if weights are taken from a fields \cite{SCHUTZENBERGER1961245} and many more, no such results are known for weighted alternating automata. In this paper we will extend the results on weighted alternating automata by connecting them to known formalisms and thereby establishing further characterizations of quantitative languages recognized by weighted alternating automata. From there on, we will use these connections to prove interesting properties for weighted alternating automata, but also to translate known results for weighted alternating automata into other settings.

After a brief recollection of basic notions and notations in Section 2, Section 3 will establish several normal forms for weighted alternating automata (Lemma \ref{1Lmm}, Lemma \ref{eqLmm}) that are used as the basis of later proofs. Section 4 includes our core result (Theorem \ref{thr:WafaIffWfta}), a characterization of weighted alternating automata by the concatenation of weighted finite tree automata (WFTA) together with certain homomorphisms. More precisely, we consider word-to-tree homomorphisms that translate words viewed as trees into trees over some arbitrary ranked alphabet. We can show that a quantitative language is recognized by a weighted alternating automata if, and only if there exists word-to-tree homomorphism and a weighted tree automaton such that the evaluation of the weighted alternating automata on any given word is the same as the evaluation of the weighted tree automaton on the image of the homomorphism of this word.

In Section 5 we will use this result to prove that the class of quantitative languages recognized by weighted alternating automata is closed under inverses of homomorphisms (Corollary \ref{crl:ClIHWafa}). However, we can prove the same is not true for homomorphisms in general (Lemma \ref{lmm:NClHWafa}). Since the closure under homomorphisms plays a key part in the proof of the Nivat theorem for weighted automata this prohibits a one-to-one translation of the Nivat theorem for weighted automata into the setting of weighted alternating automata. Nonetheless, we will utilize the connection between weighted alternating automata and weighted tree automata, as well as a Nivat theorem for weighted tree automata, to prove an adequate result for weighted alternating automata (Theorem \ref{thr:nivatWAFA}). This will lead us directly into a logical characterization of quantitative languages recognized by weighted alternating automata with the help of weighted restricted MSO logic for weighted tree automata (Section 6 Theorem \ref{thr:WafaIffsrMSO}). It is well known that recognizable tree languages are closed under inverses of tree homomorphisms. However, the same does not hold in the weighted setting for arbitrary commutative semirings. Section 7 gives a precise characterization of the class of semirings for which weighted tree automata are closed under inverses of homomorphisms (Theorem \ref{thr:clOTAUIHom}). For this purpose, we will use our core theorem, and a result form \cite{KOSTOLANYI20181}.

Lastly, in Section 8, we investigate the connection between weighted alternating automata and recently introduced polynomial automata \cite{8005101} to prove the decidability of the ZERONESS and EQUALITY problems for weighted alternating automata if weights are taken from the rational numbers (Corollary \ref{crl:DecZerEqWafa}).

Due to the limitation of space, we moved the technical parts of some proofs into the Appendix which can be found in the long version of this paper on the authors website.

\section{Preliminaries}
Let $\N=\{0,1,2,\ldots\}$ denote the set of non-negative integers. For sets $M, N$ we denote the cardinality of $M$ by $|M|$, the set of subsets of $M$ by $\pot(M)$, the Cartesian product of $M$ and $N$ by $M\times N$, and the set of mappings from $M$ to $N$ by $N^M=\{f \mid f:M\rightarrow N\}$. If $M$ is finite and non-empty, it is also called \textit{alphabet}.

For the remainder of this paper, let $\Sigma, \Gamma$ and $\Lambda$ denote alphabets. The set of all (finite) words over $\Sigma$ is denoted by $\Sigma^*$. Let $|w|$ denote the length of a word $w$ and $\Sigma^k = \{w\in \Sigma^*\mid |w|=k\}$. The unique word in $\Sigma^0$ is called empty word and denoted by $\varepsilon$. The concatenation of words $u,v$ is denoted by $u\cdot v$ or just $uv$. A mapping $h:\Lambda^*\rightarrow \Sigma^*$ is called \textit{homomorphism} if $h(u\cdot v)=h(u)\cdot h(v)$ and $h$ is \textit{non-deleting} if $h(a)\neq\varepsilon$ for all $a \in \Lambda$.

A \textit{monoid} is an algebraic structure $(M, \cdot, 1)$, where $\cdot$ is a binary associative internal operation and $m \cdot 1 = 1\cdot m$ for all $m \in M$. A monoid is \textit{commutative} if $\cdot$ is commutative.

A \textit{semiring} is an algebraic structure $(S,+,\cdot,0,1)$, where $(S,+,0)$ is a commutative monoid, $(S,\cdot,1)$ is a monoid, $s\cdot 0 = 0 = 0\cdot s$ for all $s \in S$, and $s_3 \cdot (s_1+s_2) = s_3\cdot s_1 + s_3\cdot s_2$ and $(s_1+s_2)\cdot s_3 = s_1\cdot s_3 + s_2\cdot s_3$ for all $s_1,s_2,s_3 \in S$. A semiring is \textit{commutative} if $(S,\cdot,1)$ is commutative.

\vspace{5pt}
\centerline{\emph{For the remainder of this paper, let $S$ denote a commutative semiring.}}
\vspace{5pt}

For any set $M$, we denote $S^{M}$ by $\ser{S}{M}$. For $L\subseteq M$ we define the \textit{characteristic function} $\chr{L}\in \ser{S}{M}$ by $\chr{L}(w)=1$ if $w\in L$ and $\chr{L}(w)=0$ otherwise for all $w\in M$. An element $s \in \ser{S}{\Sigma^*}$ is called \textit{$S$-weighted $\Sigma$-language}  (for short: weighted language). 

Let $X_n$ always denote a linear ordered set with $|X_n|=n\in \N$, we refer to the $i$-th element of $X_n$ by $x_i$. Let $\pols{X_n}$ denote the \textit{semiring of polynomials} with coefficients in $S$ and commuting indeterminates $x_1,\ldots, x_n$. We say $m \in \pols{X_n}$ is a \textit{monomial} if $m=s\cdot x_1^{k_1}\cdot \ldots \cdot x_n^{k_n}$ for some $s \in S$ and $k_1,\ldots, k_n \in \N$. The \textit{degree} of $m$ is $\sum_{i=1}^{n} k_i$. A monomial $m$ is a \textit{constant} monomial if its degree is zero. Each polynomial which can be written as a sum of distinct non-constant monomials is called a non-constant polynomial and the set of all non-constant polynomials is denoted by ${\pols{X_n}}_{\text{const}= 0}$. For $p,p_1,\ldots,p_n \in \pols{X_n}$ let $p\langle p_1,\ldots,p_n\rangle$ denote the simultaneous substitution of $x_i$ by $p_i$ in $p$ for all $1\leq i \leq n$.

A \textit{ranked alphabet} is an ordered pair $(\Gamma, \rank)$, where $\rank:\Gamma\rightarrow \N$ is a mapping. Without loss of generality, we assume $X_n \cap \Gamma=\emptyset$. Moreover, let $\Gamma^{(r)}=\{\gamma\in \Gamma\mid \rank(\gamma)=r\}$ and $\rank(\Gamma)=\max(\{\rank(\gamma)\mid \gamma \in \Gamma\})$. 

The set of \textit{$\Gamma$-terms over $X_n$} is the smallest set $T_{\Gamma}[X_n]$ such that $\Gamma^{(0)} \cup X_n\subseteq T_{\Gamma}[X_n]$; and $g(t_1,\ldots,t_{\rank(g)})\in T_{\Gamma}[X_n]$ for all $g \in \Gamma$ and all $t_1,\ldots,t_{\rank(g)}\in T_{\Gamma}[X_n]$. We denote $T_\Gamma (X_0)=T_\Gamma (\emptyset)$ by $T_\Gamma$. We extend $\rank$ by putting $\rank(x_i)=0$ for all $i\in\N$. If $\rank$ is clear from the context, we just write $g(t_1,\ldots, t_k)$. Moreover, we identify $g$ and $g()$ for $g\in \Gamma^{(0)} \cup X_n$. Hence, all terms $t\in T_\Gamma [X_n]$ are of the form $t=g(t_1,\ldots,t_k)$ for some $g\in \Gamma \cup X_n$ and $t_1,\ldots,t_k \in T_{\Gamma}[X_n]$.

We define $\pos:T_{\Gamma}[X_n]\rightarrow \pot(\N^*): g(t_1,\ldots,t_k)\mapsto \{\varepsilon\} \cup \bigcup_{i=1}^{k} \{i\}\cdot \pos(t_i)$. Let $t=g(t_1,\ldots,t_k)$. The mapping $\lbl{t}:\pos(t)\rightarrow\Gamma\cup X_n$ is defined by $\lbl{t}(\varepsilon)=g$; and $\lbl{t}(w)=\lbl{t_i}(v)$ if $w=iv\in \pos(t)$. We will identify $t$ and the mapping $\lbl{t}$: We write $t(w)$ to denote $\lbl{t}(w)$ and refer to terms as trees. Consequently, we have $t^{-1}(g) = \{w\in \pos(t) \mid \lbl{t}(w)=g\}$ for all $g\in T_{\Gamma}[X_n]$.

For $t=g(t_1,\ldots,t_k),t'\in T_\Gamma [X_n]$, and $w\in \pos(t)$, the \textit{subtree of $t$ at $w$}, denoted by $t|_w$ and the \textit{substitution of $t'$ in $t$ at $w$}, denoted by $t\langle w\leftarrow t'\rangle$ are defined by $t|_\varepsilon =t$ and $t\langle \varepsilon\leftarrow t'\rangle=t'$ if $w=\varepsilon$; and $t|_w=t_i|_v$ and $t\langle w\leftarrow t'\rangle=g(t_1,\ldots, t_{i-1},t_i\langle v\leftarrow t'\rangle, t_{i+1},\ldots, t_k)$ for $w=iv\in \pos(t)$. Moreover, let $M\subseteq \pos(t)$ and $|M|=l$. We define $t\langle M\leftarrow (t'_1,\ldots,t'_l)\rangle=t\langle m_l \leftarrow t'_l\rangle \cdots \langle m_1\leftarrow t'_1\rangle$, where $m_i$ is the $i$-th element of $M$ regarding to the lexicographical order on $\N^*$. In case $t'_1=\ldots=t'_l=t'$, we abbreviate $t\langle M\leftarrow(t'_1,\ldots,t'_l)\rangle$ by $t\langle M\leftarrow t'\rangle$. If $M=t^{-1}(x_i)$, we write $t\langle x_i\leftarrow (t'_1,\ldots, t'_l)\rangle$ to denote $t\langle M\leftarrow(t'_1,\ldots, t'_l)\rangle$. Finally, let $t\langle t'_1,\ldots,t'_n \rangle=t\langle t^{-1}(x_1)\leftarrow t'_1\rangle\cdots\langle t^{-1}(x_n)\leftarrow t'_n\rangle$ denote the simultaneous substitution in trees.

We say a tree $t$ is \textit{non-deleting in $l$ variables} if it contains at least one symbol from $\Gamma$ and each of the variables $x_1,\ldots, x_l$ occurs at least once in $t$. We say $t$ is \textit{linear in $l$ variables} if it is non-deleting in $l$ variables and each of the variables occurs at most once in $t$. Moreover, let $\ra(t)=\sum_{i=1}^n|t^{-1}(x_i)|$ and $T_{\Gamma}^{(r)}(X_n)=\{t\in T_\Gamma(X_n)\mid \ra(t)=r\}$ for all $r\in \N$.

A \textit{tree homomorphism} $h:\tlg\rightarrow T_\Lambda$ is a mapping such that for all $g \in \Gamma^{(r)}$ there exists $t_g \in T_{\Lambda}[X_r]$ with $h(g(t_1,\ldots,t_r))=t_g\langle h(t_1),\ldots, h(t_r)\rangle$ for all $t_1,\ldots,t_r \in T_\Gamma$. We will denote $t_g$ by $h(g)$, even though $t_g$ is not necessarily in $T_\Lambda$. A tree homomorphism is \textit{non-deleting} (resp. \textit{linear}) if each $h(g)$ is non-deleting (resp. linear) in $\rank(g)$ variables.


\section{Weighted alternating finite automata}

This section introduces weighted alternating finite automata (\wafa) and shows how to achieve desirable normal forms of \wafa (Lemma \ref{1Lmm}, Lemma \ref{eqLmm}). We will follow the definitions of \cite{KOSTOLANYI20181}.

A \textit{weighted alternating finite automaton} (\wafa) is a 5-tuple $\A=\wafaa$, where $Q=\{q_1,\ldots ,q_n\}$ is a finite set of states, $\Sigma$ is an alphabet, $\delta: Q \times \Sigma \rightarrow \pols{Q}$ is a transition function, $P_0 \in \pols{Q}$ an initial polynomial, and $\tau: Q \rightarrow S$ a final weight function.

Let $\A=\wafaa$ be a \wafa. Its \textit{state behavior} $\bhvs{\A}: Q\times\Sigma^*\rightarrow S$ is the mapping defined by \[\bhvs{A}(q,w) = \begin{cases} \tau(q) &\text{if } w=\varepsilon , \\ \delta(q,a)\big\langle\bhvs{\A}(q_1,v),\ldots, \bhvs{\A}(q_n,v)\big\rangle &\text{if } w=av \text{ for } a\in \Sigma\enspace. \end{cases}\]
Usually, we will write $\bhvs{\A}_q(w)$ instead of $\bhvs{\A}(q,w)$. 
Now, the \textit{behavior} of $\A$ is the weighted language $\bhv{\A}:\Sigma^*\rightarrow S$ defined by \[\bhv{A}(w)= P_0\big\langle\bhvs{\A}_{q_1}(w),\ldots,\bhvs{\A}_{q_n}(w)\big\rangle\enspace.\]
A weighted language $s$ is \textit{recognized} by $\A$ if and only if $\bhv{\A}=s$. Two \wafa are said to be \textit{equivalent} if they recognize the same weighted language. To ease late proofs, let $M_{(q,a)}$ denote the set of monomials that appear in $\delta(q,a)$.

We say a \wafa $\wafaa$ with $Q=\{q_1,\ldots, q_n\}$ is a \textit{weighted finite automaton} (\wfa) if $P_0 = \sum_{j=1}^{n} s_j\cdot q_j$ and $\delta(q_i,a)= \sum_{j=1}^{n} s^{a}_{ij}\cdot q_j$ for all $1\leq i \leq n$, $a \in \Sigma$. This definition coincides with the usual definition \cite{HWAc5},\cite{DROSTE200769}. We can see this by defining the initial weight function by $\lambda(q_i) = s_i$, the transition weight function by $\mu(a)(q_i,q_j)=s^{a}_{ij}$, and the final weight function by $\gamma(q_i)=\tau(q_i)$ for all $1\leq i,j\leq n$ and all $a\in \Sigma$.

We say $\A$ is \textit{nice} if it has the following properties:
\begin{enumerate}[(i)]
\item $\delta(q,a)$ is a finite sum of pairwise distinct, monomials of the form $s\cdot q_1^{k_1}\cdot\ldots \cdot q_n^{k_n}$ for all $q\in Q, a\in \Sigma$,
\item all monomials in $P_0$ and $\delta$ are non-constant,
\item $P_0 =q_1$.
\end{enumerate}
Moreover, we say that $\A$ is \textit{purely polynomial} if: 
\begin{enumerate}
  \item[(iv)] all monomials (in $P_0$ and $\delta$) have coefficient $1$.
\end{enumerate}

We want to show that we can always assume a \wafa to be nice and purely polynomial.

\begin{lmm}\label{1Lmm} For each \wafa $\A$ there exists an equivalent \wafa $\A'$ such that (i)-(iv) hold for $\A'$.
\end{lmm}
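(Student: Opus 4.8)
The plan is to modify $\A$ in three consecutive steps $\A \to \A_1 \to \A_2 \to \A_3 =: \A'$, each establishing one or more of the conditions without disturbing those already obtained and, above all, preserving the behavior.

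\emph{Step 1, condition (iii).} Write $\A = \wafaa$ with $Q = \{q_1,\ldots,q_n\}$. I would introduce one fresh state $p$, placed first in the ordering, and let $\A_1$ have state set $\{p\}\cup Q$, a transition function $\delta_1$ that coincides with $\delta$ on $Q$ and satisfies $\delta_1(p,a) = P_0\langle\delta(q_1,a),\ldots,\delta(q_n,a)\rangle$, final weights $\tau_1$ that coincide with $\tau$ on $Q$ and satisfy $\tau_1(p) = P_0\langle\tau(q_1),\ldots,\tau(q_n)\rangle$, and initial polynomial $P_0^{\A_1} = p$. Since the indeterminate $p$ does not occur in any $\delta_1(q_i,a)$, an induction on $|w|$ shows $\bhvs{\A_1}_{q_i}(w) = \bhvs{\A}_{q_i}(w)$ for all $i$; a second induction on $|w|$, using associativity of polynomial substitution, then yields $\bhvs{\A_1}_p(w) = P_0\langle\bhvs{\A}_{q_1}(w),\ldots,\bhvs{\A}_{q_n}(w)\rangle = \bhv{\A}(w)$, hence $\bhv{\A_1} = \bhv{\A}$, and $P_0^{\A_1} = p$ is a single state, so (iii) holds.

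\emph{Step 2, condition (i).} Rewrite every polynomial occurring in $\delta_1$ in its canonical form as a sum of pairwise distinct monomials $s\cdot q_1^{k_1}\cdots q_n^{k_n}$; this leaves the polynomials, hence the behavior and (iii), unchanged, and gives (i) for the resulting $\A_2$.

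\emph{Step 3, conditions (ii) and (iv).} Let $C \subseteq S$ be the finite set of coefficients occurring in the (finitely many) monomials of the transition polynomials of $\A_2$. For every $s \in C$ add a fresh state $r_s$, placed after all existing states, with $\tau_3(r_s) = s$ and $\delta_3(r_s,a) = r_s$ for all $a$; an easy induction gives $\bhvs{\A_3}_{r_s}(w) = s$ for all $w$. For each old state $q$, let $\delta_3(q,a)$ arise from $\delta_2(q,a)$ by rewriting every monomial $s\cdot q_{i_1}^{k_1}\cdots q_{i_m}^{k_m}$ as $r_s\cdot q_{i_1}^{k_1}\cdots q_{i_m}^{k_m}$, and leave $P_0^{\A_3} = p$ and $\tau_3 = \tau_2$ on old states untouched. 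Because $\bhvs{\A_3}_{r_s}(v) = s$, each rewritten monomial evaluates exactly as its original, so an induction on $|w|$ gives $\bhvs{\A_3}_q = \bhvs{\A_2}_q$ on all old states, whence $\bhv{\A_3} = \bhv{\A_2} = \bhv{\A}$. Setting $\A' := \A_3$, I would finally verify the conditions: distinct monomials of a canonical representation have distinct products of indeterminates, so the rewriting keeps them distinct and (i) survives; every monomial of $\delta_3$ contains a factor $r_s$ (or equals $r_s$) and $P_0^{\A'} = p$, so all monomials of $\A'$ are non-constant, giving (ii); $P_0^{\A'} = p$ is still the first state, giving (iii); and every monomial of $\delta_3$ and of $P_0^{\A'}$ has coefficient $1$, giving (iv).

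Everything here is constructive, and the only place needing genuine care is the behavior computation in Step 1, where the old states and the new initial state must be handled by separate inductions and the substitution identity $p\langle p_1,\ldots,p_n\rangle\langle s_1,\ldots,s_n\rangle = p\langle p_1\langle s_1,\ldots,s_n\rangle,\ldots,p_n\langle s_1,\ldots,s_n\rangle\rangle$ invoked; one must also make sure in Step 3 that the coefficient-rewriting is \emph{not} applied to $P_0$ (which would destroy (iii)) and that the auxiliary states $r_s$ themselves satisfy (i), (ii) and (iv).
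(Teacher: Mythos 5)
Your construction is correct, and it differs from the paper's proof in two worthwhile ways. For (i) and (iv) you do essentially what the paper does: bring all polynomials into canonical form, then outsource each coefficient $s$ into a fresh ``looping'' state $r_s$ with $\tau(r_s)=s$ and $\delta(r_s,a)=r_s$, so that $\bhvs{\A'}_{r_s}(w)=s$ for all $w$. The differences are in (ii) and (iii). The paper establishes (iii) by citing Lemma~6.3 of \cite{KOSTOLANYI20181}, whereas you give an explicit, self-contained construction: a fresh first state $p$ with $\delta_1(p,a)=P_0\langle\delta(q_1,a),\ldots,\delta(q_n,a)\rangle$ and $\tau_1(p)=P_0\langle\tau(q_1),\ldots,\tau(q_n)\rangle$; the two separate inductions and the substitution-composition identity you invoke are exactly what is needed, and since $p$ does not occur on the right-hand side of any transition, the old state behaviors are untouched. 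The paper establishes (ii) by a separate construction introducing a deadlock state $q_c$ for each constant $c$; you observe instead that the coefficient-extraction of your Step~3 already eliminates constant monomials for free, since a degree-zero monomial $s$ is rewritten to the degree-one monomial $r_s$, which still evaluates to $s$. This merging is sound, and your remark that distinct canonical monomials remain distinct after the rewriting (because they already differ in their indeterminate parts) correctly secures that (i) survives. The one point that genuinely needs stating, and which you do flag, is that $P_0$ must be exempted from the Step~3 rewriting --- but after Step~1 it is the single indeterminate $p$, which already satisfies (ii) and (iv). Net effect: your proof is slightly longer but fully self-contained, while the paper's is shorter at the price of an external reference and one additional intermediate construction.
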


\begin{proof}
Let $\A=\wafaa$ be a \wafa.
\begin{enumerate}[(i)]
\item Since $\cdot$ is distributive and commutative in $\pols{Q}$ there exists an equivalent \wafa $\A'$ such that (i) holds. 

\item Assume (i) holds for $\A=\wafaa$. We define a \wafa $\A'=(Q',\Sigma, \delta', P_0', \tau')$ which includes a new state $q_c$ for each constant $c$ occurring in $\A$. Furthermore, $\delta'$ and $P_0'$ are as $\delta$ and $P_0$, respectively, but each occurrence of each constant $c$ is replaced by $q_c$. Moreover, $\delta'(q_c,a)=q_c$ for all $a\in \Sigma$ and $\tau'(q_c)=c$. There is a finite number of constants in $\A$. Thus, $\A'$ is a \wafa.  It is easy to see that $\A$ and $\A'$ are equivalent and that (i)-(ii) hold for $\A'$.

\item Assume (i)-(ii) hold for $\A$. Due to Lemma 6.3 of \cite{KOSTOLANYI20181}, there exists an equivalent \wafa $\A'$ such that (i)-(iii) hold for $\A'$.

\item Assume that (i)-(iii) hold for $\A$. We define \[Q'=Q \cup \{q_s \mid s \text{ is the coefficient of some monomial in }\A\}.\] Without loss of generality, we can assume that these sets are disjoint. Furthermore, let $\delta'$  be defined by
\[
\delta'(q,a) = \begin{cases}\sum\limits_{s\cdot{q_1}^{k_1} \ldots {q_n}^{k_n} \in M_{(q,a)}} {q_1}^{k_1} \ldots {q_n}^{k_n} \cdot q_s &\text{if } q \in Q \text{ and}\\ q &\text{otherwise} \end{cases}\] for all $q \in Q'$, $a \in \Sigma$. Moreover, let $\tau'$  be defined by
\[
\tau'(q) = \begin{cases}\tau(q) &\text{if } q \in Q \text{ and}\\ s &\text{if } q=q_s\end{cases}
\] for all $q \in Q'$. Consider the \wafa $\A'=(Q',\Sigma,P_0,\delta',\tau')$. It is easy to see that $\bhv{\A}=\bhv{\A'}$ and that every monomial in $\A'$ has $1$ as coefficient. 

If the appropriate order on $Q'$ is chosen, properties (i)-(iii) hold for $\A'$, too. However, if this construction is applied to a \wfa $\A$, the resulting $\A'$ does not have to be a \wfa.
\end{enumerate} 
\end{proof}

In \cite{KOSTOLANYI20181} the transition function and the initial polynomial are not allowed to contain constants. This corresponds to the property that runs are not allowed to terminate before the entire word is read. Since it will ease later constructions, we allowed constants in our definition. Nevertheless, as Lemma \ref{1Lmm} (ii) shows, the introduction of constants does not increase expressiveness since it is possible to simulate terminating transitions by \glqq deadlock\grqq-states.

We say a \wafa $\wafaa$ is \textit{equalized} if all monomials occurring in $\delta$ have the same degree.

\begin{lmm}\label{eqLmm} For each \wafa $\A$ there exists an equivalent and equalized \wafa $\A'$.
\end{lmm}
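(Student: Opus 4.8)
The plan is to start from a \wafa $\A$ that is already nice and purely polynomial (Lemma \ref{1Lmm}), and then homogenize the degrees of the monomials in $\delta$ by padding with a fresh ``filler'' state. Let $d$ be the maximal degree of any monomial occurring in $\delta$. The idea is to add a new state $q_f$ with $\delta'(q_f,a)=q_f$ for all $a\in\Sigma$ and $\tau'(q_f)=1$, so that $\bhvs{\A'}_{q_f}(w)=1$ for every $w\in\Sigma^*$. Then each monomial ${q_1}^{k_1}\cdots{q_n}^{k_n}$ in $\delta(q,a)$ of degree $k=\sum_i k_i<d$ is replaced by ${q_1}^{k_1}\cdots{q_n}^{k_n}\cdot {q_f}^{\,d-k}$, which has degree exactly $d$; monomials of degree $d$ are left untouched. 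Since substituting the value $1$ for $q_f$ changes nothing, the state behavior of the original states is preserved, and hence $\bhv{\A}=\bhv{\A'}$.

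The steps, in order, would be: (1) invoke Lemma \ref{1Lmm} to assume $\A=\wafaa$ is nice and purely polynomial, so in particular every monomial in $\delta$ is non-constant and has coefficient $1$; record $d\geq 1$ as the maximal degree appearing in $\delta$. (2) Define $Q'=Q\cup\{q_f\}$ (disjoint union), $P_0'=P_0=q_1$, the modified $\delta'$ as above on states of $Q$ and $\delta'(q_f,a)=q_f$, and $\tau'$ extending $\tau$ by $\tau'(q_f)=1$. (3) Verify by induction on $|w|$ that $\bhvs{\A'}_{q_f}(w)=1$ for all $w$, and then that $\bhvs{\A'}_{q}(w)=\bhvs{\A}_q(w)$ for all $q\in Q$ and all $w$, using that in $p\langle p_1,\ldots\rangle$ the indeterminate $q_f$ is always evaluated to $1$ so the extra factors ${q_f}^{\,d-k}$ are inert. (4) Conclude $\bhv{\A'}(w)=P_0'\langle\ldots\rangle=P_0\langle\bhvs{\A}_{q_1}(w),\ldots\rangle=\bhv{\A}(w)$, and observe that every monomial in $\delta'$ now has degree exactly $d$, so $\A'$ is equalized.

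I do not expect a serious obstacle here; the only point that needs a little care is the base case $d=0$, which cannot occur once $\A$ is nice since niceness forces all monomials in $\delta$ to be non-constant, hence of degree $\geq 1$ (if $\delta$ were empty of monomials the claim is vacuous); and the bookkeeping that the substitution lemma for polynomials interacts correctly with the padding, i.e.\ that ${q_1}^{k_1}\cdots{q_n}^{k_n}\cdot{q_f}^{\,d-k}$ evaluated under $\bhvs{\A'}$ at $av$ gives the same value as ${q_1}^{k_1}\cdots{q_n}^{k_n}$ evaluated under $\bhvs{\A}$ at $av$, which is immediate from $\bhvs{\A'}_{q_f}(v)=1$ and commutativity of $S$. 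One should also note, as in the proof of Lemma \ref{1Lmm}(iv), that applying this construction to a \wfa need not yield a \wfa, but that is not required by the statement.
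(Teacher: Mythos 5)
Your construction is, in essence, exactly the paper's: pad each monomial of degree $k<d$ in $\delta$ with $d-k$ copies of a fresh state whose state behavior is constantly $1$, so that the substitution is inert and all behaviors of the original states are preserved. There is, however, one concrete slip. You set $\delta'(q_f,a)=q_f$, which is itself a monomial of degree $1$ occurring in $\delta'$. Since \emph{equalized} means that \emph{all} monomials occurring in the transition function have the same degree, and your padded monomials have degree $d$, the automaton you build is not equalized whenever $d>1$ --- the very conclusion you assert in step (4) fails for the monomial $q_f$ in $\delta'(q_f,a)$. The paper avoids this by setting $\delta'(q_{n+1},a)=q_{n+1}^{d}$ for the filler state; this monomial has degree $d$, and one still gets $\bhvs{\A'}_{q_{n+1}}(w)=1$ for all $w$ by induction, since $1^{d}=1$ and $\tau'(q_{n+1})=1$. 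With that one-symbol repair the rest of your argument (the inductive preservation of $\bhvs{\A'}_{q}(w)=\bhvs{\A}_{q}(w)$ for $q\in Q$ via commutativity of $S$, the observation that niceness forces $d\geq 1$, and the closing remark that coefficients are untouched so purely polynomial may be assumed, while the construction need not preserve the \wfa property) coincides with the paper's proof.
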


\begin{proof} Let $\A=(Q,\Sigma,\delta,q_1,\tau)$ be a nice \wafa and $d$ the maximum degree of monomials occurring in $\delta$. Let $q_{n+1}$ be a new state and 
\[
\delta'(q,a) = \begin{cases}\sum\limits_{s{q_1}^{k_1} \ldots {q_n}^{k_n} \in M_{(q,a)}} s\cdot {q_1}^{k_1} \ldots {q_n}^{k_n} \cdot q_{n+1}^{d-\sum_{u=1}^n k_u} &\text{if } q \in Q \text{ and}\\ q_{n+1}^d &\text{otherwise} \end{cases}\] for all $q \in Q'$, $a \in \Sigma$. As well as, \[\tau'(q) = \begin{cases}\tau(q) &\text{if } q \in Q \text{ and}\\ 1 &\text{if } q=q_{n+1}\end{cases}
\] for all $q \in Q'$. Clearly, $\A'=(Q\cup\{q_{n+1}\},\Sigma,\delta',q_1,\tau')$ is equalized. Also, it is easy to see that $\A$ and $\A'$ are equivalent and that $\A'$ is nice.

Please note that the coefficients of monomials in $\A$ where not changed, thus we also can assume that all these coefficients are $1$.
\end{proof}

Nice \wafa can be represented in the following way: As usual we depict each state by a circle. Then, each monomial $s\cdot q_1^{k_1}\ldots q_n^{k_n}$ in $\delta(q_i,a)$ is represented by a multi-arrow which is labeled by $a:s$, begins in $q_i$, and has $k_j$ heads in $q_j$ for all $1\leq j\leq n$, respectively. In case a multi-arrow has more than one head, we join these heads by a \tikz{\draw[draw= grau2, fill=grau2] (0,0) circle (1.5pt);}. If $s=1$, we omit the $s$-label. If $s=0$, we omit the complete multi-arrow. The initial polynomial is represented analogously. The final weights are represented as usual. Note that the multi-arrows can be viewed as a parallel or simultaneous transitions and that this representation coincides with the usual representation if the automaton is a $\wfa$. Consider the following example:

\begin{xmp}\label{xmp:serp} Let $S=(\N,+,\cdot,0,1)$, $\Sigma=\{a,b\}$, and $s$ the weighted language \[\begin{array}{llcl} s: &\Sigma^*&\rightarrow &S:\\
&w &\mapsto &\begin{cases}{(2^{j})}^{2^i} &\text{if } w=a^ib^j \enspace, \\ 0 &\text{otherwise.}\end{cases}\end{array} \]
We consider the \wafa $\A=(\{q,p\},\Sigma,P_0,\delta,\tau)$, defined by:
\[\begin{array}{lcllclclcllcl}
P_0 &=&q & & & &\hspace{2cm} &\delta(q,a)&=&{q}^2 &\delta(q,b)&= &p\\
\tau(q)&=&1 &\tau(p)&=&2 & &\delta(p,a)&=&0 &\delta(p,b)&=&2\cdot p
\end{array}\]
A depiction of this automaton can be seen in Figure \ref{rep1}. One can check that $\bhv{\A}=s$, for example:
\[\begin{array}{clclcl}
& \bhv{\A}(aabb) & = & q\big\langle\bhvs{\A}_q(aabb),\bhvs{\A}_p(aabb)\big\rangle & & \\
= & \bhvs{\A}_q(aabb) & = & q^2\big\langle\bhvs{\A}_q(aabb),\bhvs{\A}_p(aabb)\big\rangle & & \\
= & \big(\bhvs{\A}_q(abb)\big)^2 & = & \big(\bhvs{\A}_q(bb)\big)^{2\cdot 2} & = & \big(\bhvs{\A}_p(b)\big)^{2\cdot 2}\\
= & \big(2\cdot\bhvs{\A}_p(\varepsilon)\big)^{2\cdot 2} & = & \big(2\cdot\tau(p)\big)^{2\cdot 2} & = & {(2^2)}^{2^2}
\end{array}\]
\begin{figure}
\centering
\begin{minipage}{.5\textwidth}
\centering
\begin{tikzpicture}[every edge/.style={main edge},
          every loop/.style={main edge}, 
          every initial by arrow/.style={main edge, initial distance= 10pt},
          every accepting by arrow/.style={main edge, accepting distance= 10pt}]
  \def\x{1.5}
  \def\y{1.5}

  \node[main node,initial by arrow, initial where=above, initial text=,accepting by arrow, accepting where= below, accepting text=$1$ ] (q1) at (-1*\x,0*\y) {$q$};
  \node[main node,accepting by arrow, accepting where= right, accepting text=$2$ ] (q2) at (1*\x,0*\y) {$p$};
  \phantom{\node[main node,accepting by arrow, accepting where= below, accepting text=$1$ ] (q3) at (0*\x,-.8*\y) {$h_1$};}

  \path (q1) edge node[edge label, above] {$b$} (q2)
             edge[out=180, in=150, looseness=9] node[edge label, xshift=-5pt, yshift=-5pt] {$a$} (q1)
             edge[out=180, in=210, looseness=9] (q1)

        (q2) edge[out=-90, in=-60, looseness=9] node[edge label, yshift=-5pt] {$b$:$2$} (q2);
  \draw[draw= grau2, fill=grau2] ([rotate around={0:(q1)}]q1.west) circle (1.5pt);

\end{tikzpicture}
\caption{Representation of $\A$}\label{rep1}
\end{minipage}%
\begin{minipage}{.5\textwidth}
\centering
\begin{tikzpicture}[every edge/.style={main edge},
          every loop/.style={main edge}, 
          every initial by arrow/.style={main edge, initial distance= 10pt},
          every accepting by arrow/.style={main edge, accepting distance= 10pt}]
  \def\x{1.5}
  \def\y{1.5}

  \node[main node,initial by arrow, initial where=above, initial text=,accepting by arrow, accepting where= below, accepting text=$1$ ] (q1) at (-1*\x,0*\y) {$q$};
  \node[main node,accepting by arrow, accepting where= right, accepting text=$2$ ] (q2) at (1*\x,0*\y) {$p$};
  \node[main node,accepting by arrow, accepting where= below, accepting text=$1$ ] (q3) at (0*\x,-.8*\y) {$h_1$};

  \path (q1) edge node[edge label, above] {$b$} (q2)
             edge[out=180, in=150, looseness=9] node[edge label, xshift=-5pt, yshift=-5pt] {$a$} (q1)
             edge[out=180, in=210, looseness=9] (q1)
             edge[out=0, in = 120] (q3)

         (q3) edge[out=180, in=150, looseness=9] node[edge label, xshift=-5pt, yshift=-5pt] {$\Sigma$} (q3)
         edge[out=180, in=210, looseness=9] (q3)

        (q2) edge[out=-90, in=-60, looseness=9] node[edge label, yshift=-5pt] {$b$:$2$} (q2)
             edge[out=-90, in=60] (q3);
  \draw[draw= grau2, fill=grau2] ([rotate around={0:(q1)}]q1.west) circle (1.5pt);
  \draw[draw= grau2, fill=grau2] ([rotate around={0:(q1)}]q1.east) circle (1.5pt);
  \draw[draw= grau2, fill=grau2] ([rotate around={0:(q3)}]q3.west) circle (1.5pt);
  \draw[draw= grau2, fill=grau2] ([rotate around={0:(q2)}]q2.south) circle (1.5pt);

\end{tikzpicture}
\caption{Representation of equalized, nice $\A$}\label{rep2}
\end{minipage}
\end{figure}
\end{xmp}

It is easy to see that $s$ from Example \ref{xmp:serp} is not recognizable by a \wfa. Thus, \wafa are more expressive than \wfa when weights are taken from the non-negative integers. However, this is not the case for every semiring. A semiring $S$ is \textit{locally finite} if for every finite $X\subseteq S$ the generated subsemiring $\langle X \rangle$ is finite.
The following result characterizes semirings on which \wafa and \wfa are equally expressive:

\begin{thr}[Theorem 7.1 in \cite{KOSTOLANYI20181}]\label{thr:WAFAiffWFAiffLocFin} The class of $S$-weighted $\Sigma$-languages recognizable by \wafa and the class of $S$-weighted $\Sigma$-languages recognizable by \wfa are equal if and only if $S$ is locally finite.
\end{thr}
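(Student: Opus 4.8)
The plan is to prove the two directions separately; the inclusion of \wfa-recognizable languages among the \wafa-recognizable ones is immediate from the definitions, so it suffices to show (a) if $S$ is locally finite then every \wafa-recognizable weighted language is \wfa-recognizable, and (b) if $S$ is not locally finite then some \wafa-recognizable weighted language is not \wfa-recognizable.

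For (a), fix a \wafa $\A = \wafaa$ with $Q = \{q_1,\dots,q_n\}$, and let $S_0 \subseteq S$ be the subsemiring generated by $0$, $1$, all coefficients of monomials occurring in $P_0$ and in $\delta$, and all values $\tau(q)$; this generating set is finite, so by local finiteness $S_0$ is finite. An easy induction on $|w|$ using the recursion for $\bhvs{\A}$ shows $\bhvs{\A}_q(w) \in S_0$ for all $q$ and $w$, hence also $\bhv{\A}(w) \in S_0$. Therefore the state vector $\beta(w) := (\bhvs{\A}_{q_1}(w),\dots,\bhvs{\A}_{q_n}(w))$ ranges over the \emph{finite} set $S_0^n$, satisfies $\beta(\varepsilon) = (\tau(q_1),\dots,\tau(q_n))$, and obeys $\beta(av) = F_a(\beta(v))$ where $F_a\colon S_0^n \to S_0^n$ sends $\bar s$ to $(\delta(q_i,a)\langle\bar s\rangle)_i$; moreover $\bhv{\A}(w) = P_0\langle\beta(w)\rangle$. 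Thus $\bhv{\A}$ is computed by a finite deterministic device scanning $w$ from right to left, and turning this into a \wfa is routine: either present it as a deterministic \wfa over state set $S_0^n$ computing $w \mapsto \bhv{\A}(w^{R})$ and invoke closure of \wfa-recognizable languages under reversal, or directly take the \wfa with state set $S_0^n$, initial weight $P_0\langle\bar s\rangle$ at $\bar s$, transition weight $1$ on $\bar s \xrightarrow{a} \bar s'$ exactly when $\bar s = F_a(\bar s')$ and $0$ otherwise, and final weight $1$ at $\beta(\varepsilon)$ and $0$ elsewhere.

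For (b), suppose $S$ is not locally finite and pick a finite $X \subseteq S$ with $\langle X\rangle$ infinite. For every $c \in S$ the weighted language $s_c$ over a one-letter alphabet $\{a\}$ given by $s_c(a^m) = c^{2^m}$ is \wafa-recognizable: the one-state \wafa with $P_0 = q$, $\delta(q,a) = q^2$, $\tau(q) = c$ satisfies $\bhvs{\A}_q(a^m) = c^{2^m}$ by the computation in Example~\ref{xmp:serp}. On the other hand, if $\B$ is a \wfa with $\kappa$ states then $\bhv{\B}(a^m)$ is a sum of at most $\kappa^{m+1}$ products, each of exactly $m+2$ weights of $\B$; so if we measure an element $t$ of a finitely generated subsemiring by the least degree $\partial(t)$ of an $\N$-coefficient polynomial in a fixed finite generating set that evaluates to $t$ — a quantity satisfying $\partial(t+t') \le \max(\partial(t),\partial(t'))$, $\partial(tt') \le \partial(t)+\partial(t')$, and $\partial(y) \le 1$ for each generator $y$ — then $\partial(\bhv{\B}(a^m)) \le m+2$ with respect to any finite generating set containing the weights of $\B$. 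It therefore remains to choose $c \in \langle X\rangle$ for which $\partial(c^{2^m})$ is not bounded by a linear function of $m$. One locates such a $c$ by a case analysis on whether the multiplicative submonoid $\langle X\rangle_\times$ is infinite or finite, using that a finitely generated commutative monoid all of whose elements are torsion is finite (in the first case some $c$ then has pairwise distinct powers; in the second case $(\langle X\rangle,+)$ is finitely generated and infinite, hence has a non-torsion element, and one replaces $s_c$ by an iterated-squaring \wafa whose behaviour on $a^*$ has genuinely unbounded degree), and then verifying in each case that the degree bound above is violated, contradicting \wfa-recognizability of the corresponding language.

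The main obstacle is exactly this last step of (b): since an arbitrary commutative semiring carries no order, no subtraction, and no Cayley–Hamilton identity, the intuition ``\wfa cannot grow doubly exponentially'' must be made rigorous combinatorially through the degree function $\partial$, and one needs enough structure theory of finitely generated commutative semirings to pinpoint an element whose iterated squares are certifiably of unbounded degree. Direction (a) is, by comparison, routine once one observes that local finiteness confines the state vectors to a finite set. Lemmas~\ref{1Lmm} and \ref{eqLmm} are not strictly required here, but may be invoked to assume the automata involved are nice, purely polynomial, and equalized, which tidies the bookkeeping in both constructions.
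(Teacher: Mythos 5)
First, note that the paper does not prove this theorem at all: it is imported verbatim as Theorem~7.1 of \cite{KOSTOLANYI20181}, so there is no in-paper argument to compare against and your attempt has to stand entirely on its own. Your direction (a) does stand. Confining the state vector $\beta(w)$ to the finite set $S_0^n$ (finite by local finiteness of $S$ applied to the finitely many coefficients and final weights) and reading off a deterministic, reversed \wfa over state set $S_0^n$ is correct and complete; the direct construction you give (transition weight $1$ on $\bar s \xrightarrow{a} \bar s'$ iff $\bar s = F_a(\bar s')$, final weight $1$ only at $\beta(\varepsilon)$) admits exactly one nonzero-weight run on each $w$, of weight $P_0\langle\beta(w)\rangle = \bhv{\A}(w)$.

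Direction (b), however, has a genuine gap, and it sits exactly where you locate ``the main obstacle'': the degree function $\partial$ cannot do the job you assign to it. Because you must allow $\partial(t+t')\le\max(\partial(t),\partial(t'))$, and because an $\N$-coefficient polynomial has a constant term $n\cdot 1 = 1+\cdots+1$ of degree $0$ (or, if you exclude constants, a term $n\cdot y$ of degree $1$ once $1$ is among the generators), every element of the additive submonoid generated by $1$ already satisfies $\partial(t)\le 1$. Take $S=\N$ and $c=2$ --- the paper's own Example 3.1.5 lives essentially here, and $\N$ is not locally finite: then $\partial(2^{2^m})\le 1$ for every $m$, so your upper bound $\partial(\bhv{\B}(a^m))\le m+2$ is satisfied vacuously and yields no contradiction. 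For $\N$ one instead needs a growth-rate bound ($\bhv{\B}(a^m)\le \kappa^{m+1}M^{m+2}$ is singly exponential while $2^{2^m}$ is doubly exponential), which uses the order on $\N$ and does not transfer to arbitrary commutative semirings. The case analysis you gesture at (torsion in $\langle X\rangle_\times$ versus in $(\langle X\rangle,+)$) is the right shape of argument, but it is not carried out, and in the case that actually covers $\N$ the invariant you propose to verify it with provably fails. As written, (b) is a plan rather than a proof; the hard direction genuinely requires the structure analysis of finitely generated commutative semirings done in \cite{KOSTOLANYI20181} (or a replacement for $\partial$ that is not killed by additive accumulation), and citing that result, as the paper does, is the defensible move.
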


\section{A characterization of \wafa via weighted finite tree automata}
Our central result Theorem \ref{thr:WafaIffWfta} is included in this section, as well as the definition for weighted finite tree automata.

The connection between alternating automata and trees is well known. Often trees are used to define the runs of alternating automata. This is possible for \wafa too (see Appendix in the long version). We want to strengthen this connection by the use of tree automata and tree homomorphisms. In order to do so, we need some additional definitions.

An element $r\in \ser{S}{\tlg}$ is called \textit{($S$-weighted) tree language}. A \textit{weighted finite tree automaton} (\wfta) is a 4-tuple $\A=\wftaa$, where $Q=\{q_1,\ldots,q_n\}$ is a finite set of states, $\Gamma$ is a ranked alphabet, $\delta=(\delta_k\mid 1\leq k\leq\rank(\Gamma))$ is a family of transition functions $\delta_k:\Gamma^{(k)}\rightarrow S^{Q^k\times Q}$, and $\lambda: Q \rightarrow S$ a root weight function. 

If $k$ is clear from the context, we will denote tuples $(p_1,\ldots,p_k)$ by $\overbar{p}$. Moreover, since $k$ in $\delta_k(g)$ is clear from $g$, we will denote $\delta_k(g)$ by $\delta_g$.

Let $\A=\wftaa$ be a \wfta. Its \textit{state behavior} $\bhvs{\A}: Q\times \tlg\rightarrow S$ is the mapping defined by \[\bhvs{A}\big(q,g(t_1,\ldots,t_k)\big) =\sum\limits_{\overbar{p}\in Q^k}\delta_g(\overbar{p},q) \cdot \prod_{i=1}^{k} \bhvs{\A}(p_i,t_i)\enspace.\]
Usually, we will write $\bhvs{\A}_q(t)$ instead of $\bhvs{\A}(q,t)$. 
Now, the \textit{behavior} of $\A$ is the weighted tree language $\bhv{\A}:\tlg\rightarrow S$ defined by \[\bhv{A}(t)= \sum_{i=1}^{n}\lambda(q_i)\cdot \bhvs{\A}_{q_i}(t)\enspace.\]
A weighted tree language $s$ is \textit{recognized} by $\A$ if and only if $\bhv{\A}=s$.

It is well known that a word over $\Sigma$ can be represented as a $1$-ary tree: Each letter of $\Sigma$ is given rank one and a new end-symbol $\#$ of rank zero is added. Then $w_0w_1\ldots w_n$ translates to the tree $w_0(w_1(\ldots w_n(\#)\ldots))$. Here, we want to represent words as full $r$-ary trees for any arbitary $r \in \N$. Given an alphabet $\Sigma$ and $r\geq 1$, we define the ranked alphabet $\Sigma_\#^{r}=\Sigma \cup \{\#\}$ with $\rank(\#)=0$ and $\rank(a)=r$ for all $a \in \Sigma$. For all $w\in \Sigma^*$ the tree $\tw\in T_{\Sigma_\#^r}$ is defined by $t^r_\varepsilon=\#$; and $t^r_w=a(t^r_v,\ldots,t^r_v)$ if $w=av$ with $a\in \Sigma$. We call $h^r:\Sigma^* \rightarrow T_{\Sigma_\#^r}: w\mapsto t_w^r$ the \textit{generic tree homomorphism (of rank $r$)}. The case $r=1$ is special since for all $t\in T_{\Sigma_\#^1}$ there exists $w\in \Sigma^*$ such that $t=t_w^1$. Therefore, if clear from the context, we will identify $\Sigma$ and $\Sigma_\#^1$, $\Sigma^*$ and $T_{\Sigma_\#^1}$, as well as $w$ and $t_w^1$. It is well known that a weighted $\Sigma$ language is recognizable by a \wfa over $\Sigma$ if and only if it is recognizable by a \wfta over $\Sigma_\#^1$.

The key observation is that the behavior of a \wafa $\A$ on $w$ can be characterized by the behavior of a \wfta on $\tw$ where $r$ is the degree of polynomials in an equalized version of $\A$. Even more, the behavior of a \wfta on $h(w)$ (where $h$ is a tree homomorphism) can be characterized by the behavior of a \wafa on $w$.

\begin{lmm}\label{lmm:d1WafaIffWfta} If $s\in \ser{S}{\Sigma^*}$ is recognized by a \wafa, then $s=\bhv{\B}\circ h^r$ for some \wfta $\B$ and $r\in \N$.
\end{lmm}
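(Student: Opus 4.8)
The plan is to start from a \wafa recognizing $s$, put it into a convenient normal form, and then read off a \wfta directly from its transition polynomials. By Lemma~\ref{1Lmm} followed by Lemma~\ref{eqLmm} (whose construction keeps the automaton nice and all coefficients equal to $1$), I may assume $\A=(Q,\Sigma,\delta,q_1,\tau)$ with $Q=\{q_1,\dots,q_n\}$ is nice, purely polynomial and equalized; let $r$ be the common degree of the monomials occurring in $\delta$. Since $\A$ is nice, those monomials are non-constant, so $r\geq1$ and the generic tree homomorphism $h^r\colon\Sigma^*\to T_{\Sigma_\#^r}$ is well defined. I will construct a \wfta $\B=(Q,\Sigma_\#^r,\delta',\lambda)$ on the \emph{same} state set for which $\bhvs{\B}_q(t_w^r)=\bhvs{\A}_q(w)$ holds for all $q\in Q$ and $w\in\Sigma^*$; then $\bhv{\B}(t_w^r)=\bhvs{\B}_{q_1}(t_w^r)=\bhvs{\A}_{q_1}(w)=\bhv{\A}(w)=s(w)$ (the first and the third equality because $\lambda$ and $P_0$ are supported on $q_1$), which is exactly $s=\bhv{\B}\circ h^r$.

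The construction of $\delta'$ is where the real work is. On the rank-$0$ symbol $\#$ I set $\delta'_\#(q)=\tau(q)$, matching $\bhvs{\A}_q(\varepsilon)=\tau(q)$. On a letter $a\in\Sigma$ (of rank $r$) I must reproduce the substitution $\delta(q,a)\langle\bhvs{\A}_{q_1}(v),\dots,\bhvs{\A}_{q_n}(v)\rangle$. Every monomial of $\delta(q,a)$ has the form $q_1^{k_1}\cdots q_n^{k_n}$ with $k_1+\dots+k_n=r$, and it names the multiset over $Q$ containing $q_j$ with multiplicity $k_j$. The tempting choice — assigning weight $1$ to \emph{every} tuple $\overbar{p}\in Q^r$ that is a permutation of this multiset — is wrong: it would make $\bhvs{\B}_q(a(t_v^r,\dots,t_v^r))$ carry a multinomial coefficient $\binom{r}{k_1,\dots,k_n}$ in front of $\prod_j\bhvs{\A}_{q_j}(v)^{k_j}$, and over an arbitrary commutative semiring that factor cannot be removed. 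Instead, for each monomial $q_1^{k_1}\cdots q_n^{k_n}\in M_{(q,a)}$ I pick one canonical tuple $\overbar{p}$, say the one listing $q_1$ exactly $k_1$ times, then $q_2$ exactly $k_2$ times, and so on, and set $\delta'_a(\overbar{p},q)=1$; every other tuple receives weight $0$. Because $\A$ is nice, distinct monomials of $\delta(q,a)$ yield distinct multisets and hence distinct canonical tuples, so these assignments never collide. Finally I set $\lambda(q_1)=1$ and $\lambda(q_j)=0$ for $j\neq1$.

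It remains to prove $\bhvs{\B}_q(t_w^r)=\bhvs{\A}_q(w)$ by induction on $|w|$. The base case $w=\varepsilon$ is immediate from $t_\varepsilon^r=\#$ and $\delta'_\#(q)=\tau(q)$. For $w=av$, all $r$ children of the root of $t_w^r=a(t_v^r,\dots,t_v^r)$ equal $t_v^r$, so
\[
\bhvs{\B}_q(t_w^r)=\sum_{\overbar{p}\in Q^r}\delta'_a(\overbar{p},q)\cdot\prod_{i=1}^{r}\bhvs{\B}_{p_i}(t_v^r)\enspace,
\]
and, since only the canonical tuples carry nonzero weight, regrouping each surviving product along its canonical tuple (by commutativity of $\cdot$ in $S$) turns the right-hand side into $\sum_{q_1^{k_1}\cdots q_n^{k_n}\in M_{(q,a)}}\prod_{j=1}^{n}\bhvs{\B}_{q_j}(t_v^r)^{k_j}$. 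By the induction hypothesis $\bhvs{\B}_{q_j}(t_v^r)=\bhvs{\A}_{q_j}(v)$, so this equals $\delta(q,a)\langle\bhvs{\A}_{q_1}(v),\dots,\bhvs{\A}_{q_n}(v)\rangle=\bhvs{\A}_q(av)$, which completes the induction and hence, as explained above, the proof.

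The only genuinely delicate step — and the one I expect to demand the most care — is this translation of a transition monomial into \wfta transitions: one must resist the symmetric assignment and single out a canonical ordering, which is exactly where the normal form (pairwise distinct monomials, all coefficients $1$, uniform degree) is indispensable. The rest is bookkeeping, plus a routine check that the degenerate transition functions $\delta'_k$ for $1\leq k<r$ (there are no symbols of those ranks) cause no trouble.
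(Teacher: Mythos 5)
Your proposal is correct and follows essentially the same route as the paper: normalize via Lemma \ref{1Lmm} and Lemma \ref{eqLmm}, build a \wfta over $\Sigma_\#^r$ on the same state set, give nonzero weight only to the single canonically ordered tuple for each monomial (exactly the paper's remark that $\beta_a(\overbar{p},q)$ vanishes unless the $p_i$ respect the linear order on states, which avoids the multinomial overcounting you rightly flag), and conclude by induction on $|w|$. The only cosmetic difference is that you first pass to the purely polynomial form and assign weight $1$ to the canonical tuple, whereas the paper keeps the coefficient $s$ and assigns it directly.
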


\begin{proof} Assume $s$ is recognized by a \wafa. Due to Lemma \ref{1Lmm} and Lemma \ref{eqLmm}, we may assume that $s$ is recognized by a nice and equalized \wafa $\A=(Q,\Sigma,\alpha,P_0,\tau)$. Let $r$ be the unique degree of monomials in $\A$. We define the \wfta $\B=(Q,\Sigma^r_{\#},\beta,\lambda)$ with $\lambda=\chr{\{q_1\}}$ and
\[
\begin{array}{lcl}
\beta_\#(\varepsilon,q)&=&\tau(q)\enspace ,\\
\beta_a(\overbar{p},q)&=&\begin{cases}s &\text{if } s\cdot p_1\cdot\ldots\cdot p_{r}\in M_{(q,a)}\enspace,\\
                        0 &\text{otherwise}\enspace .\end{cases}
\end{array}
\]
Please note that the order of $p_1, \ldots, p_r$ is of importance. Thus, due to the fact that $\A$ is nice, $\beta_a(\overbar{p},q)$ becomes always zero if the $p_i$ are not ordered according to the linear order on the states of $\A$.

By induction over the length of $w \in \Sigma^*$ (see Appendix in the long version), we get \[\bhvs{\A}_q(w)=\bhvs{\B}_q(\tw)\] for all $q\in Q$, $w\in \Sigma^*$. Since $\A$ is nice and thus $P_0=q_1$, we consequently have \[\bhv{\A}(w) = \bhvs{\A}_{q_1}(w) = \bhvs{\B}_{q_1}(\tw) = \sum_{i=1}^{n}\lambda(q_i)\cdot \bhvs{\B}_{q_i}(\tw) = \bhv{\B}(\tw)\] for all $w\in \Sigma^*$. Since $h^r(w)=t_w^r$, this finishes our proof.
\end{proof}

The following example illustrates this connection between \wafa and \wfta.

\begin{xmp} We consider the automaton $\A$ from Example \ref{xmp:serp}. It is easy to construct the corresponding \wfta $\B=(Q,\Gamma,\beta,\lambda)$ from the equalized version $\A'$ (Figure \ref{rep2}). First, we copy the set of states (in order) $Q=\{q,p,h_1\}$. Since the maximum degree of polynomials in $\A$ was $2$ we get $\Gamma=\{a^{(2)}, b^{(2)},{\#}^{(0)}\}$. The root weight function corresponds to the initial weights. However, $\A'$ is nice and thus $\lambda = \chr{\{q\}}$. The transition weight functions $\beta_a$ and $\beta_b$ can be defined using the multi arrows in Figure \ref{rep2}. For example, the $b$-labeled multi arrow in the middle corresponds to $\beta_b(ph_1,q)=1$. Finally, the final weights in $\A'$ are captured by $\beta_\# (\varepsilon,q)=1, \beta_\# (\varepsilon,h_1)=1$, and $\beta_\# (\varepsilon,p)=2$. The only non-zero run on $t^2_{ab}$ can be seen in Figure \ref{rep3}.

\begin{figure}
\centering
\begin{tikzpicture}
    \node (hash) at (-5,-1.8) {\textbf{\#}};
    \node (b) at (-3,-1.8) {\textbf{b}};
    \node (a) at (-1,-1.8) {\textbf{a}};
    
    \node (root) at (2,0) {$1$};
    \node (1) at (0,0) {$q$};
    \node (2) at (-2,0.7) {$q$};
    \node (3) at (-2,-0.7) {$q$};
    \node (4) at (-4,1.05) {$p$};
    \node (5) at (-4,0.35) {$h_1$};
    \node (6) at (-4,-0.35) {$p$};
    \node (7) at (-4,-1.05) {$h_1$};
    \node (8) at (-6,1.05) {$2$};
    \node (9) at (-6,0.35) {$1$};
    \node (10) at (-6,-0.35) {$2$};
    \node (11) at (-6,-1.05) {$1$};

    \path (11) edge[->] (7)
          (10) edge[->] (6)
          (9) edge[->] (5)
          (8) edge[->] (4)
          (7) edge[->] node[midway, below] {$2$} (3)
          (6) edge[->] (3)
          (5) edge[->] node[midway, below] {$2$} (2)
          (4) edge[->] (2)
          (3) edge[->] node[midway, below] {$1$}(1)
          (2) edge[->] (1)
          (1) edge[->] (root);
\end{tikzpicture}
\caption{Run of translated \wfta on $t^2_{ab}$}\label{rep3}
\end{figure}
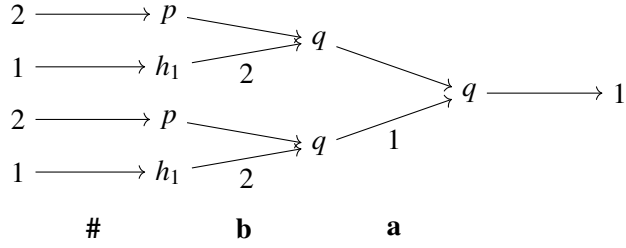
\end{xmp}

\begin{lmm}\label{lmm:d2WafaIffWfta} Let $\B=\wftaa$ a \wfta and $h:\Sigma^*\rightarrow T_\Gamma$ a tree homomorphism, then $\bhv{\B}\circ h$ is recognized by a \wafa.
\end{lmm}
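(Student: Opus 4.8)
The plan is to construct a \wafa $\A=(Q,\Sigma,\delta_{\A},P_0,\tau)$ directly from $\B$ and $h$ which, while reading a word $w$ letter by letter, carries out the bottom-up evaluation of $\B$ on the tree $h(w)$. We use the identification of $\Sigma^*$ with $T_{\Sigma_\#^1}$, so that $h$ is determined by a tree $h(a)\in T_\Gamma[X_1]$ for each $a\in\Sigma$ together with a closed tree $h(\#)\in T_\Gamma$, with $h(w)=h(a)\langle h(v)\rangle$ whenever $w=av$, while $h(\varepsilon)=h(\#)$.

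The key device is a ``one-variable'' version of the state behavior of $\B$. For $t\in T_\Gamma[X_1]$ and $q\in Q$ I define a polynomial $P^t_q\in\pols{Q}$ by structural recursion: $P^{x_1}_q=q$ (the indeterminate $q$); $P^{c}_q=\bhvs{\B}_q(c)\in S$, viewed as a constant polynomial, for $c\in\Gamma^{(0)}$; and $P^{g(t_1,\ldots,t_k)}_q=\sum_{\overbar{p}\in Q^k}\delta_g(\overbar{p},q)\cdot\prod_{i=1}^{k}P^{t_i}_{p_i}$ for $g\in\Gamma^{(k)}$ with $k\geq 1$. Since each $t$ is a finite tree this is well defined and lands in $\pols{Q}$. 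A routine structural induction on $t$ then yields the substitution identity
\[\bhvs{\B}_q\big(t\langle t'\rangle\big)=P^t_q\big\langle\bhvs{\B}_{q_1}(t'),\ldots,\bhvs{\B}_{q_n}(t')\big\rangle\qquad\text{for all }t\in T_\Gamma[X_1],\ t'\in T_\Gamma,\ q\in Q:\]
the two base cases are immediate, and the step $t=g(t_1,\ldots,t_k)$ follows by expanding $\bhvs{\B}_q$ on $g(t_1\langle t'\rangle,\ldots,t_k\langle t'\rangle)$, applying the induction hypothesis to each factor $\bhvs{\B}_{p_i}(t_i\langle t'\rangle)$, and using that $p\mapsto p\big\langle\bhvs{\B}_{q_1}(t'),\ldots,\bhvs{\B}_{q_n}(t')\big\rangle$ is a semiring homomorphism $\pols{Q}\to S$ to pull the substitution outside the sum and the product.

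With this available I set $\delta_{\A}(q,a)=P^{h(a)}_q$, $P_0=\sum_{i=1}^{n}\lambda(q_i)\cdot q_i$, and $\tau(q)=\bhvs{\B}_q(h(\#))$; this is a genuine \wafa. An induction on $|w|$ now gives $\bhvs{\A}_q(w)=\bhvs{\B}_q(h(w))$ for all $q\in Q$ and $w\in\Sigma^*$: for $w=\varepsilon$ both sides equal $\bhvs{\B}_q(h(\#))$; for $w=av$ we have $\bhvs{\A}_q(av)=\delta_{\A}(q,a)\big\langle\bhvs{\A}_{q_1}(v),\ldots,\bhvs{\A}_{q_n}(v)\big\rangle=P^{h(a)}_q\big\langle\bhvs{\B}_{q_1}(h(v)),\ldots,\bhvs{\B}_{q_n}(h(v))\big\rangle$ by the induction hypothesis, which equals $\bhvs{\B}_q\big(h(a)\langle h(v)\rangle\big)=\bhvs{\B}_q(h(av))$ by the substitution identity. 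Therefore $\bhv{\A}(w)=P_0\big\langle\bhvs{\A}_{q_1}(w),\ldots,\bhvs{\A}_{q_n}(w)\big\rangle=\sum_{i=1}^{n}\lambda(q_i)\cdot\bhvs{\B}_{q_i}(h(w))=\bhv{\B}(h(w))$, i.e.\ $\bhv{\A}=\bhv{\B}\circ h$.

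I do not expect a genuine obstacle here. In particular, no linearity or non-deletion assumption on $h$ is required: a variable occurring several times in some $h(a)$ only makes $\delta_{\A}(q,a)$ a higher-degree polynomial, and an $h(a)$ not containing $x_1$ only contributes constant monomials — both of which the constant-tolerant definition of \wafa used here permits. The one place that needs care is the bookkeeping in the substitution identity, namely keeping substitution of states into polynomials (as in the definition of $\bhvs{\A}$) in step with substitution of trees (as in the definitions of a tree homomorphism and of $\bhvs{\B}$); the recursive definition of $P^t_q$ is engineered precisely to make this correspondence transparent.
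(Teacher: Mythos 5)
Your construction is correct and is essentially the paper's own proof: your polynomial $P^{h(a)}_q$ is exactly the paper's $\alpha(q,a)=\sum_{\overbar{p}}\delta'_{h(a)}(\overbar{p},q)\cdot\prod_i p_i$, with your structural recursion on $t$ matching the paper's recursive extension $\delta'_t$ of the transition functions, and your substitution identity is the induction the paper defers to its appendix. The only difference is packaging (building the polynomial directly versus tabulating its coefficients by tuples of states), so there is nothing further to add.
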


\begin{proof}
Assume $s=\bhv{\B}\circ h \in \ser{S}{\Sigma^*}$, where $\B=\wftaa$ is a \wfta with $|Q|=n$ and $h: \Sigma^* \rightarrow \tlg$ a tree homomorphism. We want to construct a \wafa $\A$ such that $\bhv{\A}=s$.

If $h$ would be the generic homomorphism, we could define $\delta'(q,a)=\sum_{\overbar{p}\in Q^{r}} \delta_a(\overbar{p},q)$ and use the same proof as in the first direction. However, we want to prove this for arbitrary homomorphisms. To achieve this, we give some additional definitions.

Under $h$, each letter becomes a tree. Nonetheless, we are not interested in the structure of $h(a)$, but want to handle it as if it is a ranked letter. Therefore, we use $h(a)\langle x_1\leftarrow(p_1,\ldots,p_{r})\rangle$ to disambiguate its $r=\ra(h(a))$ variables. Furthermore, we extend the family of transition functions $(\delta_k)_{0\leq k\leq\rank{\Gamma}}$ into a family $(\delta'_k)_{k\in \N}$ with $\delta'_k:T_{\Gamma}^{(k)}(\{x_1\})\rightarrow S^{Q^k\times Q}$. We use the same notations for $\delta'$ as for $\delta$ and define $\delta'_k$ recursively as follows. 

\begin{enumerate}
  \item For all $g\in \Gamma^{(0)}$ let $\delta'_g(\varepsilon,q) =  \delta_g(\varepsilon,q)$ for all $q\in Q$,
  \item $\delta'_{x_1}(p,q) =  \chr{\{q\}}(p)$ for all $(p,q)\in Q\times Q$, and
  \item if $t=g(t_1,\ldots,t_k)$ for some $g\in \Gamma^{(k)}$, we define \[\delta'_t(\overbar{p}_1,\ldots,\overbar{p}_k,q) =\sum\limits_{\overbar{p}'\in Q^k}\delta_g(\overbar{p}',q) \cdot\prod_{i=1}^k \delta'_{t_i}(\overbar{p}_i, p'_i)\enspace\]for all $(\overbar{p}_1,\ldots,\overbar{p}_k,q)\in Q^{\ra(t_1)}\times\ldots\times Q^{\ra(t_k)}\times Q$.
\end{enumerate}Please note, for $t\in T_{\Gamma}$ we have $\delta'_t(\varepsilon,q)=\bhvs{\A}_q(t)$ by definition.

Now, we are well equipped to define the \wafa $\A=(Q,\Sigma, \alpha,P_0,\tau)$. Let $P_0=\sum_{i=1}^n \lambda(q_i)\cdot q_i$, $\tau(q_i)=\delta'_{h(\#)}(\varepsilon,q_i)$ for all $1\leq i\leq n$, and
$\alpha(q,a)=\sum_{\overbar{p} \in Q^{\ra(h(a))}} \delta'_{h(a)}(\overbar{p},q)\cdot \prod_{i=1}^{\ra(h(a))}p_i$ for all $q\in Q$, $a \in \Sigma$.

By induction over the length of $w \in \Sigma^*$ we get (see Appendix in the long version) \[\bhvs{\B}_q(h(w))=\bhvs{\A}_q(w)\] for all $q\in Q$, $w\in \Sigma^*$. Finally, for all $w\in \Sigma^*$ we get: \[\bhv{\B}(h(w)) = \sum_{i=1}^n \lambda(q_i) \cdot \bhvs{\B}_{q_i}(h(w))
= \sum_{i=1}^n \lambda(q_i) \cdot \bhvs{\A}_{q_i}(w)
= P_0\langle\bhvs{\B'}_{q_1}(w),\ldots,\bhvs{\A}_{q_n}(w)\rangle
= \bhv{\A}(w)\]
\end{proof}

This leads us to our main result.

\begin{thr} \label{thr:WafaIffWfta} A weighted language $s\in \ser{S}{\Sigma^*}$ is recognized by a \wafa if and only if there exists a ranked alphabet $\Gamma$, a tree homomorphism $h: \Sigma^* \rightarrow \tlg$, and a \wfta $\A=\wftaa$ such that $s=\bhv{\A}\circ h$.
\end{thr}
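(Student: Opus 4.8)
The plan is to derive Theorem~\ref{thr:WafaIffWfta} directly by combining the two lemmas just proved, since together they already cover both directions.

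\medskip

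\noindent\textbf{Forward direction ($\Rightarrow$).} Suppose $s\in\ser{S}{\Sigma^*}$ is recognized by a \wafa. By Lemma~\ref{lmm:d1WafaIffWfta} there is a \wfta $\B=(Q,\Sigma^r_\#,\beta,\lambda)$ and an $r\in\N$ with $s=\bhv{\B}\circ h^r$. Here $\Gamma:=\Sigma^r_\#$ is a ranked alphabet and $h:=h^r:\Sigma^*\to T_{\Sigma^r_\#}=\tlg$ is, by definition, the generic tree homomorphism of rank $r$, hence in particular a tree homomorphism in the sense of the preliminaries. Thus $s=\bhv{\B}\circ h$ has exactly the claimed form, with $\A:=\B$.

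\medskip

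\noindent\textbf{Backward direction ($\Leftarrow$).} Conversely, suppose $\Gamma$ is a ranked alphabet, $h:\Sigma^*\to\tlg$ a tree homomorphism, and $\A=\wftaa$ a \wfta with $s=\bhv{\A}\circ h$. Then $s=\bhv{\A}\circ h$ is recognized by a \wafa by Lemma~\ref{lmm:d2WafaIffWfta} (applied with $\B:=\A$). This completes the equivalence.

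\medskip

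\noindent\textbf{Obstacle / caveat.} The only subtlety is purely bookkeeping: one must make sure the two lemmas are phrased so that their conclusions and hypotheses match the statement of the theorem exactly. Concretely, the forward direction needs the generic tree homomorphism $h^r$ of Lemma~\ref{lmm:d1WafaIffWfta} to count as a ``tree homomorphism'' $\Sigma^*\to\tlg$ in the theorem's sense --- this holds because under the identification of $\Sigma^*$ with $T_{\Sigma^1_\#}$, the map $w\mapsto t_w^r$ is exactly the tree homomorphism sending each letter $a$ (of rank $1$ in $\Sigma^1_\#$) to the term $a(x_1,\dots,x_1)\in T_{\Sigma^r_\#}[X_1]$ and $\#$ to $\#$. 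The backward direction is an immediate specialization of Lemma~\ref{lmm:d2WafaIffWfta}. Since both lemmas were already set up with this application in mind, no further work is needed, and the proof reduces to citing them.
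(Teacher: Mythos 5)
Your proof is correct and matches the paper's own argument, which likewise derives the theorem as an immediate consequence of Lemma~\ref{lmm:d1WafaIffWfta} (forward direction, with $\Gamma=\Sigma^r_\#$ and $h=h^r$) and Lemma~\ref{lmm:d2WafaIffWfta} (backward direction). Your additional remark that $h^r$ is a genuine tree homomorphism under the identification of $\Sigma^*$ with $T_{\Sigma^1_\#}$ is a correct and welcome piece of bookkeeping that the paper leaves implicit.
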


\begin{proof} This is an immediate consequence of Lemma \ref{lmm:d1WafaIffWfta} and Lemma \ref{lmm:d2WafaIffWfta}.
\end{proof}

This result allows us to transfer results from \wfta to \wafa. Moreover, additional observations in the proofs show that one can give a weight preserving, bijective mapping between the runs of $\A$ and $\B$. This allows us to translate results about runs of \wfta into results of runs of \wafa. 



\section{A Nivat theorem for \wafa}
This section leads to the Nivat-like characterization of \wafa (Theorem \ref{thr:nivatWAFA}), but first we will prove that weighted languages recognized by \wafa are closed under inverses of homomorphisms (Corollary \ref{crl:ClIHWafa}), but not under homomorphisms (Lemma \ref{lmm:NClHWafa}).

Let $s_1\odot s_2$ denote the \textit{Hadamard product (pointwise product)} of two weighted languages $s_1,s_2\in \ser{S}{\Sigma^*}$. Furthermore, a word homomorphism $h:\Gamma^*\rightarrow \Sigma^*$ is called \text{non-deleting} if and only if $h(a)\neq \varepsilon$ for all $a \in \Sigma$. Let $r \in \ser{S}{\Gamma^*}$. For $h:\Gamma^*\rightarrow \Sigma^*$ a non-deleting homomorphism, we define $h(r)\in \ser{S}{\Sigma^*}$ by \(h(r)(w)=\sum_{v \in h^{-1}(w)} r(v)\) for all $w \in \Sigma^*$. For $h:\Sigma^*\rightarrow\Gamma^*$ we define $h^{-1}(r)\in \ser{S}{\Sigma^*}$ by \(h^{-1}(r)(w) = r(h(w))\) for all $w \in \Sigma^*$. Note that we have $h(\chr{L}) (w)=1$ if and only if there exists $v\in \Gamma^*$ with $h(v)=w$ and $v\in L$ in the boolean setting. Thus, $h(\chr{L})=\chr{h(L)}$. Analogously, we get $h^{-1}(\chr{L}) = \chr{h^{-1}(L)}$. Hence, $h(r)$ corresponds to the application of a homomorphism, while $h^{-1}(r)$ corresponds to the application of the inverse of a homomorphism in the non-weighted setting.

The original Nivat Theorem \cite{Nivat} characterizes word-to-word transducers. A generalized version for \wfa over arbitrary semirings (Theorem 6.3 in \cite{droste2013weighted}) can be stated in the following way:

\begin{thr}[Nivat-like theorem for \wfa \cite{droste2013weighted}]\label{NfWFA} A weighted language $s \in \ser{S}{\Sigma^*}$ is recognized by a \wfa if and only if there exist an alphabet $\Gamma$, a non-deleting homomorphism $h:\Gamma^*\rightarrow \Sigma^*$, a regular language $L \subseteq\Gamma^*$, and a \wfa $\A_w$ with exactly one state such that:\[s = h(\bhv{\A_w} \odot \chr{L}) \enspace.\]
\end{thr}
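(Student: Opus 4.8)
The plan is to prove the Nivat-like theorem for \wfa (Theorem \ref{NfWFA}) by massaging the standard Nivat argument. Both directions are largely bookkeeping once the right $\Gamma$ and $h$ are chosen.

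\textbf{The ``only if'' direction.} Assume $s=\bhv{\A_w}$ for some \wfa $\A_w=(Q,\Sigma,\delta,\lambda,\gamma)$ with $Q=\{q_1,\dots,q_m\}$, where $\delta(q_i,a)=\sum_j s^a_{ij}q_j$. I would take $\Gamma$ to be the set of all transitions $(q_i,a,q_j)$ that carry a nonzero weight (or simply $Q\times\Sigma\times Q$), with the non-deleting homomorphism $h:\Gamma^*\to\Sigma^*$ reading off the letter of each transition: $h(q_i,a,q_j)=a$ (so $h(a)\ne\varepsilon$ trivially). The regular language $L\subseteq\Gamma^*$ should be the set of words $(q_{i_0},a_1,q_{i_1})(q_{i_1},a_2,q_{i_2})\cdots(q_{i_{k-1}},a_k,q_{i_k})$ whose transitions form a consecutive chain (target of one equals source of the next) and such that $q_{i_0}$ has nonzero initial weight and $q_{i_k}$ has nonzero final weight — this is recognized by a DFA on $\Gamma$ whose states track the ``current'' $\B$-state, so $L$ is regular. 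The single-state \wfa $\A_w'$ over $\Gamma$ should assign to the one-letter word $(q_i,a,q_j)$ the weight $s^a_{ij}$ (its unique state gets initial and final weight $1$), so that $\bhv{\A_w'}$ is multiplicative on concatenation: $\bhv{\A_w'}(\tau_1\cdots\tau_k)=\prod_{\ell} \Wt(\tau_\ell)$. Then $(\bhv{\A_w'}\odot\chr L)(v)$ equals $\lambda(q_{i_0})\cdot\big(\prod_\ell s^{a_\ell}_{i_{\ell-1}i_\ell}\big)\cdot\gamma(q_{i_k})$ when $v\in L$ and $0$ otherwise — but I should fold $\lambda$ and $\gamma$ into $\chr L$'s role only if they are $0/1$; in general I incorporate them into the single-state automaton by letting its ``constant'' account for the endpoints, or (cleaner) absorb $\lambda(q_{i_0})$ and $\gamma(q_{i_k})$ into the first and last transition weights via a standard preprocessing. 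Summing over $v\in h^{-1}(w)$ then ranges exactly over all accepting runs of $\A_w$ on $w$, so $h(\bhv{\A_w'}\odot\chr L)(w)=\sum_{\text{runs}}\Wt(\text{run})=\bhv{\A_w}(w)=s(w)$.

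\textbf{The ``if'' direction.} Conversely, suppose $s=h(\bhv{\A_w}\odot\chr L)$ with $h:\Gamma^*\to\Sigma^*$ non-deleting, $L$ regular, and $\A_w$ a one-state \wfa over $\Gamma$. The weighted language $\bhv{\A_w}\odot\chr L\in\ser{S}{\Gamma^*}$ is recognized by a \wfa over $\Gamma$ (Hadamard product of a recognizable weighted language with the characteristic function of a regular language is recognizable — a standard product construction on \wfa). It remains to show that the image under a non-deleting homomorphism of a \wfa-recognizable weighted language is again \wfa-recognizable: given a \wfa $\B$ over $\Gamma$ recognizing $r$, and $h:\Gamma^*\to\Sigma^*$ non-deleting, one replaces each transition of $\B$ labeled by $b\in\Gamma$ (with $h(b)=c_1\cdots c_\ell$, $\ell\ge1$) by a chain of $\ell$ new transitions through $\ell-1$ fresh states, carrying the letters $c_1,\dots,c_\ell$ and the original weight on (say) the last one. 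Non-deletingness ($\ell\ge1$) is exactly what guarantees this chain is non-empty so no $\varepsilon$-transitions are introduced; the resulting \wfa recognizes $h(r)$ because runs on $w\in\Sigma^*$ correspond bijectively and weight-preservingly to pairs $(v,\text{run of }\B\text{ on }v)$ with $h(v)=w$. Hence $s=h(r)$ is recognized by a \wfa.

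\textbf{Main obstacle.} The conceptual content is thin — every ingredient (product construction for Hadamard with a regular language, the ``decompose a run into its transitions'' encoding, the chain-replacement for homomorphic images) is routine for \wfa. The one place demanding care is the treatment of the initial and final weight functions in the ``only if'' direction: a one-state \wfa has essentially no room to encode $\lambda$ and $\gamma$, so I must be sure to push those endpoint weights either into the regular constraint (fine in the boolean case but not in general) or, correctly, into the first/last transition weights by a standard normal-form preprocessing of $\A_w$ (e.g. adding a fresh initial and final state, or merging $\lambda$, $\gamma$ into adjacent transition weights), and then verify the sum over $h^{-1}(w)$ still matches $\bhv{\A_w}(w)$ exactly. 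A secondary subtlety is handling the empty word $w=\varepsilon$: since $h$ is non-deleting, $h^{-1}(\varepsilon)=\{\varepsilon\}$, so one must check the constant terms ($s(\varepsilon)=\sum_i\lambda(q_i)\gamma(q_i)$) line up, which they do provided $\varepsilon\in L$ is handled in the construction of $L$.
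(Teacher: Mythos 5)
The paper does not actually prove this theorem — it is quoted from \cite{droste2013weighted} — and your construction is precisely the standard one from that line of work: letters of $\Gamma$ encode transitions, $L$ is the (local, hence regular) language of consecutive runs with admissible endpoints, the one-state automaton applies the transition weights, and the converse direction combines the Hadamard product with $\chr{L}$ (product construction) and closure under non-deleting homomorphic images (chain replacement with private intermediate states). All of that is sound, including your fix for the initial and final weights in the ``only if'' direction: either flagging the first and last letters in $\Gamma$ and folding $\lambda(q_{i_0})$, $\gamma(q_{i_k})$ into the corresponding transition weights, or normalising $\A_w$ to $0/1$ initial and final weights, works for all non-empty words.

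The one genuine gap is your treatment of $w=\varepsilon$, and it cannot be repaired the way you suggest. A one-state \wfa satisfies $\bhv{\A_w}(v)=\lambda\cdot\bigl(\prod_{\ell}\mu(v_\ell)\bigr)\cdot\gamma$ for every $v$, so the constant $c=\lambda\cdot\gamma$ is a factor of every value of $h(\bhv{\A_w}\odot\chr{L})$, and at $\varepsilon$ (where $h^{-1}(\varepsilon)=\{\varepsilon\}$ because $h$ is non-deleting) the value is exactly $c\cdot\chr{L}(\varepsilon)$. Deciding whether $\varepsilon\in L$ only toggles between $c$ and $0$; it does not let you realise an arbitrary $s(\varepsilon)=\sum_i\lambda(q_i)\gamma(q_i)$. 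Concretely, over $(\N,+,\cdot,0,1)$ the \wfa-recognizable series with $s(\varepsilon)=2$, $s(a)=1$ and $s(w)=0$ otherwise admits no such decomposition: matching $s(\varepsilon)$ forces $c=2$, which makes $s(a)=1$ unreachable since every value is then a multiple of $2$. So the claim that the constant terms ``line up provided $\varepsilon\in L$ is handled'' is false in general; the statement must either be read modulo the value at $\varepsilon$ (e.g.\ for series with $s(\varepsilon)\in\{0,1\}$, or with a separate additive correction at $\varepsilon$), and your proof should address this explicitly rather than wave it through.
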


Please note, $\A_w$ does not depend on any input and is called $\A_w$ since it is responsible for the application of weights. Our goal is, to generalize this result up to \wafa. This Nivat-like theorem is strongly connected to the closure of weighted languages recognized by \wfa under (inverses) of homomorphisms. Thus, we will investigate these properties for \wafa.

\subsection{Closure properties}
A class $K$ of $S$-weighted languages is said to be \textit{closed under homomorphisms} if $s \in \ser{S}{\Sigma^*}\cap K$  and $h:\Gamma^*\rightarrow \Sigma^*$ a homomorphism implies $h(s) \in K$. Moreover, $K$ is \textit{closed under inverses of homomorphisms} if $s' \in \ser{S}{\Gamma^*} \cap K$ and $h:\Gamma^*\rightarrow \Sigma^*$ a homomorphism implies $h^{-1}(s') \in K$. The same notions are used for weighted tree languages.

The class of weighted languages recognized by \wfa is closed under (inverses of) homomorphisms (Lemma 6.2 in \cite{droste2013weighted}). \wafa are also closed under inverses of homomorphisms. In fact, this is an easy corollary of Lemma \ref{lmm:d1WafaIffWfta} and Lemma \ref{lmm:d2WafaIffWfta}.

\begin{crl}\label{crl:ClIHWafa} The class of weighted languages recognized by \wafa is closed under inverses of homomorphisms.
\end{crl}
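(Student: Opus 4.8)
The plan is to reduce the closure of \wafa under inverses of homomorphisms to the corresponding (known) closure property for \wfta, using the sandwich of Lemma~\ref{lmm:d1WafaIffWfta} and Lemma~\ref{lmm:d2WafaIffWfta}. Let $s \in \ser{S}{\Gamma^*}$ be recognized by a \wafa, and let $g : \Sigma^* \rightarrow \Gamma^*$ be a word homomorphism; we must show $g^{-1}(s)$, defined by $g^{-1}(s)(w) = s(g(w))$, is again recognized by a \wafa. First I would apply Lemma~\ref{lmm:d1WafaIffWfta} to obtain a \wfta $\B$ over $\Gamma_\#^r$ and an integer $r$ with $s = \bhv{\B}\circ h^r$, where $h^r : \Gamma^* \rightarrow T_{\Gamma_\#^r}$ is the generic tree homomorphism of rank $r$. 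Then $g^{-1}(s)(w) = s(g(w)) = \bhv{\B}\big(h^r(g(w))\big) = \bhv{\B}\big((h^r \comp g)(w)\big)$, so it suffices to understand the composite $h^r \comp g : \Sigma^* \rightarrow T_{\Gamma_\#^r}$.

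The key observation is that $h^r \comp g$, although it sends words to trees, is itself (the restriction to words-as-trees of) a tree homomorphism in the sense defined in the Preliminaries. Concretely, viewing $\Sigma^*$ as $T_{\Sigma_\#^1}$ via the identification $a \mapsto a(\cdot)$ and $\varepsilon \mapsto \#$, I would define a tree homomorphism $h : T_{\Sigma_\#^1} \rightarrow T_{\Gamma_\#^r}$ by setting, for each letter $a \in \Sigma$ with $g(a) = b_1 b_2 \cdots b_m \in \Gamma^*$, the tree $h(a) \in T_{\Gamma_\#^r}[X_1]$ to be the full $r$-ary ``caterpillar'' $b_1(b_2(\cdots b_m(x_1,\ldots,x_1)\cdots))$ with every leaf equal to $x_1$ (and $h(a) = x_1$ if $g(a) = \varepsilon$), and $h(\#) = \#$. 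A straightforward induction on $|w|$ shows $h(t_w^1) = h^r(g(w))$ for all $w \in \Sigma^*$; the base case is $w = \varepsilon$ and the inductive step uses the definition of $t^r_{av} = b_1(\cdots b_m(t^r_v,\ldots,t^r_v)\cdots)$ together with $h(a(t)) = h(a)\langle h(t)\rangle$. Consequently $g^{-1}(s) = \bhv{\B}\circ h$ with $h$ a genuine tree homomorphism, and Lemma~\ref{lmm:d2WafaIffWfta} immediately yields a \wafa recognizing $g^{-1}(s)$.

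The main obstacle — really the only subtlety — is that Lemma~\ref{lmm:d2WafaIffWfta} as stated takes a tree homomorphism $h : \Sigma^* \rightarrow T_\Gamma$, i.e.\ it already treats its domain ``word'' alphabet as a unary ranked alphabet with an end symbol; one has to check that the definition of $h$ above genuinely fits this template (in particular that $\ra(h(a))$ is well-defined, which here is $r^{\,|g(a)|}$ if $g(a)\neq\varepsilon$ and $1$ if $g(a) = \varepsilon$, and that the empty-image case $g(a)=\varepsilon$, where $h(a)=x_1$ contains no symbol from $\Gamma$, is permitted — it is, since Lemma~\ref{lmm:d2WafaIffWfta} places no non-deletion requirement on $h$). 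Everything else is bookkeeping: the composition of homomorphisms, and the inductive identity $h(t^1_w) = h^r(g(w))$, which I would either state as a one-line induction or defer to the appendix in the long version, consistent with the paper's treatment of the analogous inductions in Lemmas~\ref{lmm:d1WafaIffWfta} and~\ref{lmm:d2WafaIffWfta}.
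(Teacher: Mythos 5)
Your proof is correct and follows exactly the paper's own route: apply Lemma~\ref{lmm:d1WafaIffWfta} to write $s=\bhv{\B}\circ h^r$, note that $h^r\circ g$ is a tree homomorphism, and conclude with Lemma~\ref{lmm:d2WafaIffWfta}. The only difference is that you spell out explicitly (via the full $r$-ary trees with $r^{|g(a)|}$ occurrences of $x_1$, including the deleting case $g(a)=\varepsilon$) what the paper dismisses with ``clearly''.
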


\begin{proof} Let $h':\Lambda^*\rightarrow \Sigma^*$ be a homomorphism and $s\in \ser{S}{\Sigma^*}$ recognized by a \wafa. Due to Lemma \ref{lmm:d1WafaIffWfta}, we get $s=\bhv{B}\circ h^r$. Clearly, $h^r\circ h':\Lambda^*\rightarrow \Sigma^r_\#$ is a tree homomorphism. Thus, due to Lemma \ref{lmm:d2WafaIffWfta}, ${h'}^{-1}(s)=(\bhv{B}\circ h^r)\circ h'=\bhv{B}\circ (h^r\circ h')$ is recognized by a \wafa.
\end{proof}

However, the same is not true for the closure under homomorphisms.

\begin{lmm}\label{lmm:NClHWafa} The class of weighted languages recognized by \wafa is not closed under homomorphisms.
\end{lmm}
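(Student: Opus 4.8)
The plan is to exhibit a single weighted language that is recognized by a \wafa but whose image under a carefully chosen (necessarily deleting) homomorphism is not recognizable by any \wafa. The natural candidate comes from Example \ref{xmp:serp}: the language $s$ with $s(a^ib^j)=(2^j)^{2^i}$ and $s(w)=0$ otherwise is recognized by a \wafa. Applying the projection-like homomorphism $h:\{a,b\}^*\to\{b\}^*$ with $h(a)=\varepsilon$ and $h(b)=b$, we obtain $h(s)(b^j)=\sum_{i\geq 0}(2^j)^{2^i}$. For a fixed $j\geq 1$ this sum diverges in the usual sense, so to make the statement meaningful the homomorphism should instead be taken non-deleting but collapsing two letters, e.g.\ $\Sigma=\{a,b\}$, $\Lambda=\{a,b,c\}$, start from a \wafa-recognizable language $s'$ on $\Lambda^*$ that already has ``finite fibers'' under $h$, and let $h(a)=a$, $h(b)=h(c)=b$. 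Concretely I would pick $s'$ on $\{a,b,c\}^*$ recognized by a \wafa with $s'(a^i b c^?\cdots)$ arranged so that $h(s')(a^i b^2)$ grows doubly-exponentially in $i$ \emph{and} in an unbounded number of summands, forcing non-regularity of the support in a way \wafa cannot realize.

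**The key steps**, in order: (1) Fix the source language $s'\in\ser{S}{\Lambda^*}$ and verify it is \wafa-recognizable by writing down an explicit nice \wafa (this is routine, patterned on Example \ref{xmp:serp}). (2) Fix the homomorphism $h$ and compute $h(s')$ pointwise via the defining formula $h(s')(w)=\sum_{v\in h^{-1}(w)}s'(v)$; the point of the construction is that each fiber $h^{-1}(w)$ is finite, so this is a genuine finite sum, but the \emph{number} of nonzero summands and their magnitudes both blow up. (3) Establish a growth/support obstruction: show that any \wafa-recognizable language $t\in\ser{S}{\Sigma^*}$ satisfies a structural constraint that $h(s')$ violates. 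The cleanest route is through Theorem \ref{thr:WafaIffWfta}: if $h(s')=\bhv{\A}\circ h^r$ for a \wfta $\A$ and the generic tree homomorphism $h^r$, then along the ``path'' $a^i b^j$ the value $\bhv{\A}(t^r_{a^i b^j})$ is computed by a \wfta on a tree of bounded branching, and one bounds its growth: over a commutative semiring like $\N$, $\bhvs{\A}_q(t^r_w)$ as a function of $|w|$ can be at most a fixed tower whose \emph{height is fixed} (independent of $w$), or satisfies a linear-recurrence-type bound per state. Then I exhibit $h(s')$ exceeding every such bound — the prototype being a value like $\sum_{i=0}^{n}(2^n)^{2^i}$ forced to occur at a word of length $O(n)$, which grows faster than any \wfta on fixed-branching trees can produce.

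**The main obstacle** is step (3): turning ``\wafa values grow at most like a fixed-height tower'' into a precise, correctly-quantified lemma. The subtlety is that \wafa \emph{can} already produce doubly-exponential growth (Example \ref{xmp:serp}), so the obstruction must be sharper than ``not doubly exponential'' — it must be about the interaction between the exponential nesting depth and the regular/periodic structure of the support. I expect to need a pumping-style argument on the \wfta side: for $t^r_w$ with $|w|$ large, some state behaves periodically along the spine, so $\bhvs{\A}_q(t^r_w)$ satisfies a recurrence $\bhvs{\A}_q(t^r_{uvw})$ expressible by composing a fixed polynomial operation, which caps the achievable growth \emph{rate of the exponent's exponent}; I then design $s'$ so that $h$ superimposes infinitely many exponential scales and breaks exactly that cap. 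Care is needed to choose $S=\N$ (or a subsemiring) so that ``$=0$'' genuinely constrains the support and no cancellation can rescue an automaton, since over a ring the argument would fail.
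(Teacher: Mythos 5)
There is a genuine gap. Your overall shape matches the paper's: both proofs collapse two letters via a non-deleting homomorphism so that the image at $w$ becomes a sum over a fiber $h^{-1}(w)$ whose size grows with $|w|$, and both need (a) a \wafa-recognizable source language and (b) a proof that the image is \emph{not} \wafa-recognizable. The paper discharges (b) by citing a known hardness result: over the semiring $\mathbb{B}[x]$ it takes $r_R(a^i\#c^kd^l)=x^{ki}$ (recognizable), collapses $c,d\mapsto b$ to obtain $r_B(a^i\#b^j)=\sum_{k=0}^{j}x^{ki}$, and invokes Lemma 8.3 of \cite{KOSTOLANYI20181}, which states that $r_B$ is not recognized by any \wafa. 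Your proposal never pins down the source language $s'$ concretely and, more importantly, leaves step (3) --- the non-recognizability of the image --- as an acknowledged ``main obstacle'' with only a sketch of a hoped-for pumping/growth lemma. That lemma is precisely the hard part, and nothing in the paper (or in standard \wafa theory) supplies it; without it the argument is incomplete.

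Moreover, the specific obstruction you sketch is unlikely to work as stated. A \wafa over $\N$ whose transition polynomials have degree $d$ can produce values as large as roughly $c^{d^{|w|}}$ by iterated substitution, so the magnitude of a candidate like $\sum_{i=0}^{n}(2^n)^{2^i}\approx 2^{n2^n}$ stays comfortably inside the achievable growth envelope; a ``fixed-height tower'' or linear-recurrence bound per state does not exclude it. This is exactly why the cited Lemma 8.3 works over $\mathbb{B}[x]$, where the obstruction is combinatorial (which monomials can appear and how their exponent patterns must be generated) rather than a growth-rate bound. If you want a self-contained proof, you would either have to reprove something like Lemma 8.3, or find a genuinely different, correctly quantified structural invariant of \wafa-recognizable series that your image violates --- neither of which is present in the proposal.
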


\begin{proof} Let $\Sigma=\{a,b,\#\}$, $\mathbb{B}$ the Boolean semiring, and $\mathbb{B}[x]$ the semiring of polynomials in one indeterminate. Consider \[r_B: \Sigma^*\rightarrow \mathbb{B}[x]: w \mapsto \begin{cases}\sum\limits_{k=0}^{j} x^{ki} &\text{if } w=a^i\#b^j \enspace, \\ 0 &\text{otherwise.}\end{cases}\] Due to Lemma 8.3 from \cite{KOSTOLANYI20181}, we know $r_B$ is not recognized by a \wafa. Let $\Gamma=\{a,c,d,\#\}$, $h:\Gamma^*\rightarrow \Sigma^*$ the non-deleting homomorphism induced by $h(a)=a,h(\#)=\#, h(c)=h(d)=b$, and \[r_R: \Gamma^*\rightarrow \mathbb{B}[x]:w \mapsto \begin{cases} x^{ki} &\text{if } w=a^i\#c^kd^l \enspace, \\ 0 &\text{otherwise.}\end{cases}\] Then \[\begin{array}{llll}
h(r_R)(w) &= \sum\limits_{v\in h^{-1}(w)} r_R(v) &=\begin{cases}\sum\limits_{v\in h^{-1}(w)} r_R(v) &\text{if } w=a^i\#b^j \\ 0 &\text{otherwise}\end{cases} &\\
&=\begin{cases}\sum\limits_{k=0}^{j} r_R(a^i\#c^kd^{j-k}) &\text{if } w=a^i\#b^j \\ 0 &\text{otherwise}\end{cases} &=\begin{cases}\sum\limits_{k=0}^{j} x^{ki} &\text{if } w=a^i\#b^j \\ 0 &\text{otherwise}\end{cases} &=r_B(w) \end{array}\] for all $w \in \Sigma^*$. Thus, $h(r_R)$ is not recognized by a \wafa. The series $r_R$ is recognized by the \wafa $\A_R$ which can be found in the appendix of the long version. This completes our proof.
\end{proof}

Nonetheless, the proof of the second direction of Theorem \ref{NfWFA} relies on the closure under homomorphisms. Thus, due to Lemma \ref{lmm:NClHWafa}, a one to one translation of Theorem \ref{NfWFA} into the framework of alternating automata is prohibited. Moreover, in the proof of the first direction of Theorem \ref{NfWFA}, $L$ is defined as a language of runs of $\A$. As mentioned above, runs of \wafa are trees. Therefore, we will utilize a Nivat-like theorem for \wfta to prove the corresponding result for \wafa.

\subsection{A Nivat-like characterization of \wfta}
Nivat-like characterizations for weighted tree languages have been investigated in the past. Unranked trees were considered in \cite{DBLP:conf/birthday/DrosteG17}, while a very general result for graphs can be found in \cite{10.1007/978-3-662-48057-1_15}. Here, for the readers convenience, we want to restate a more restricted version for ranked trees. 

Let $h:\tlg\rightarrow T_\Lambda$ be a non-deleting tree homomorphism, $s\in \ser{S}{T_\Lambda}$. In analogy to words, we define $h(s) \in \ser{S}{T_\Gamma}$ by \(h(s)(t) = \sum_{t' \in h^{-1}(t)} s(t')\) for all $t \in T_\Gamma$. For linear homomorphisms the following is known: 

\begin{lmm}[Theorem 3.8 in \cite{HWAc9}]\label{lmm:TcluioH} The class of weighted tree languages recognized by \wfta is closed under linear homomorphisms.
\end{lmm}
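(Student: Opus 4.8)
The plan is to prove that for a non-deleting, linear tree homomorphism $h:\tlg\rightarrow T_\Lambda$ and a weighted tree language $s\in\ser{S}{T_\Lambda}$ recognized by a \wfta $\B=(Q,\Lambda,\delta,\lambda)$, the language $h(s)$ is again recognized by a \wfta. The key structural fact is that $h$ associates to each $g\in\Gamma^{(k)}$ a tree $h(g)=t_g\in T_\Lambda[X_k]$ that is linear in $k$ variables, so each variable $x_1,\dots,x_k$ occurs exactly once and $t_g$ contains at least one symbol from $\Lambda$. Because $h$ is non-deleting and linear, the fibre $h^{-1}(t)$ of a tree $t\in T_\Gamma$ is obtained by a bottom-up nondeterministic "folding" of $t$: we read $t$ and, whenever we have processed a subtree of $t$ that matches a copy of $t_g$ with the variable-slots filled by already-folded subtrees, we may replace it by $g$ applied to those subtrees. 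The absence of deletion guarantees that every symbol of $t$ is consumed, and linearity guarantees that the variable-slots of $t_g$ are matched to disjoint subtrees in a one-to-one fashion, so no weight is duplicated or lost in an uncontrolled way.

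The construction I would carry out: build a \wfta $\A$ over $\Gamma$ whose states are pairs consisting of a state $q\in Q$ of $\B$ together with bookkeeping data encoding "a partial match of some pattern $t_g$". More precisely, for each $g\in\Gamma^{(k)}$ and each position $p\in\pos(t_g)$ that is not a variable position, introduce a state recording that the subtree read so far equals $t_g|_p$ with its variable-slots substituted by subtrees whose $\B$-behaviours have been accumulated; the $\B$-state component at such an auxiliary state is the $\B$-state reached by $\delta$ on $t_g|_p$ after the recursive substitutions. The transitions $\delta^{\A}_g$ for $g\in\Gamma^{(k)}$ then: (1) start the pattern $t_g$ at its leaves; (2) propagate a partial match up $t_g$ using the transition weights $\delta$ of $\B$ as dictated by the shape of $t_g$; and (3) at the root of $t_g$, convert the completed match into the plain $\B$-state $\delta$ would have assigned, reading the $k$ child-states as the folded arguments. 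The root weight function of $\A$ is $\lambda$ composed with the fold of $h(\#)$-style leaf patterns; more carefully, one has to also handle the leaves of $t_g$ that are constants of $\Lambda$. The induction to verify is $\bhvs{\A}_q(t) = \sum_{t'\in h^{-1}(t)} \bhvs{\B}_q(t')$ for all $q\in Q$ and $t\in T_\Gamma$, proved by structural induction on $t$, using commutativity of $S$ to reorder the products over the (disjoint, by linearity) variable-slots of each $t_g$, and then $\bhv{\A}(t) = \sum_i \lambda(q_i)\bhvs{\A}_{q_i}(t) = \sum_i\lambda(q_i)\sum_{t'\in h^{-1}(t)}\bhvs{\B}_{q_i}(t') = \sum_{t'\in h^{-1}(t)}\bhv{\B}(t') = h(s)(t)$.

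The main obstacle I expect is the bookkeeping that encodes a partial traversal of each pattern tree $t_g$ inside the state set of $\A$ while keeping the state set finite — since $\Gamma$ is finite and each $t_g$ is a fixed finite tree, the set of "pattern positions" is finite, so finiteness is not in danger, but getting the transition weights exactly right at the points where one pattern ends and arguments (themselves already folded, possibly via other patterns) are plugged in requires care. In particular one must ensure that a child subtree that sits in a variable-slot of $t_g$ is never simultaneously interpreted as part of another pattern match, which is exactly where non-deletion (every leaf is consumed by a match) and linearity (each slot used once) are needed; without them a subtree could be "shared" and the fibre $h^{-1}(t)$ would not factor cleanly, breaking the inductive identity. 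The secondary subtlety is the treatment of constant symbols of $\Lambda$ occurring inside the $t_g$ versus genuine leaves $\Gamma^{(0)}$ of $t$, and correspondingly $h(\#)$-type patterns feeding into the root weights; this is handled by allowing a pattern $t_g$ with $g\in\Gamma^{(0)}$ (so $k=0$, no variables) to be matched entirely within one run step. A clean way to avoid re-deriving all of this is to cite the known construction for weighted tree automata under linear tree homomorphisms, which is exactly Theorem 3.8 of \cite{HWAc9}; the plan above is essentially a sketch of that argument specialized to the non-deleting case.
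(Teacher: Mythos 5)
The paper offers no proof of this lemma---it is quoted verbatim as Theorem~3.8 of \cite{HWAc9}---so there is nothing in-paper to compare against; your sketch is essentially the standard argument behind that cited result, and its core idea is right: a bottom-up automaton that nondeterministically parses the input tree into occurrences of the patterns $h(g)$, storing in its state a position inside the current pattern together with the tuple of $\B$-states already collected at the (pairwise distinct, by linearity) variable slots, and charging the weight $\delta_g(\overline{p},q)$ exactly once when an occurrence of $h(g)$ is completed; linearity and non-deletion are precisely what make parses correspond bijectively to preimages and let commutativity reorder the products over the disjoint slots. Two repairs are needed before this is a proof. First, your alphabets are inconsistent (an inconsistency you inherit from a typo in the paper's own definition of $h(s)$ for trees): with $h\colon\tlg\rightarrow T_\Lambda$, the recognized series $s$ must live on the domain $T_\Gamma$, its image $h(s)$ on the codomain $T_\Lambda$, and the automaton you build must run over $\Lambda$, since what it reads are the $\Lambda$-labelled nodes of the patterns $h(g)$; as written, ``$\A$ over $\Gamma$'' matching $\Lambda$-patterns in a tree $t\in T_\Gamma$, and the identity $\bhvs{\A}_q(t)=\sum_{t'\in h^{-1}(t)}\bhvs{\B}_q(t')$ for $t\in T_\Gamma$, do not typecheck. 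Second, your step (2) (``propagate a partial match up $t_g$ using the transition weights $\delta$ of $\B$'') is not well defined---$\delta$ is indexed by symbols of the source alphabet, not by interior pattern nodes---and would risk charging a pattern more than once; the interior propagation steps must carry weight $1$, with $\delta_g$ applied only at step (3), once per completed match. With these corrections the structural induction you state is exactly the right correctness claim.
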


Based on this, it is easy to prove the following result:

\begin{thr}[Nivat-like theorem for \wfta (Theorem 12 in \cite{DBLP:conf/birthday/DrosteG17})]\label{thr:nivatWFTA} A weighted tree language $s \in \ser{S}{T_\Gamma}$ is recognized by a \wfta if and only if there exist a ranked alphabet $\Lambda$, a linear tree homomorphism $h:T_\Lambda\rightarrow T_\Gamma$, a regular tree language $L \subseteq T_\Lambda$, and a \wfta $\A_w$ with exactly one state such that:\[s = h(\bhv{\A_w} \odot \chr{L}) \enspace.\]
\end{thr}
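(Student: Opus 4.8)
The plan is to prove both directions of the equivalence, using Lemma \ref{lmm:TcluioH} as the engine for one direction and a direct construction for the other.

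For the "if" direction, suppose $s = h(\bhv{\A_w} \odot \chr{L})$ with $h$ linear, $L$ a regular tree language, and $\A_w$ a one-state \wfta. First I would observe that $\chr{L}$ is recognized by a \wfta (this is the standard fact that regular tree languages are exactly the supports/behaviors of Boolean \wfta, or rather that $\chr L$ with values in $S$ is recognizable — one takes the finite bottom-up tree automaton for $L$ and reads its transitions as $0/1$ weights). Next, the class of \wfta-recognizable tree languages is closed under Hadamard product — this is the usual product construction on \wfta, where the state set is $Q_1 \times Q_2$, the transition weights multiply, and the root weights multiply; since $S$ is commutative this works without issue. Hence $\bhv{\A_w} \odot \chr{L}$ is \wfta-recognizable. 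Finally, applying the linear homomorphism $h$ keeps us inside the \wfta-recognizable class by Lemma \ref{lmm:TcluioH}. Therefore $s$ is recognized by a \wfta.

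For the "only if" direction, suppose $s$ is recognized by a \wfta $\A = \wftaa$. The idea mirrors the classical Nivat construction: separate "structure" (the run) from "weight". I would introduce a fresh ranked alphabet $\Lambda$ whose symbols encode, at each position, the input symbol together with the tuple of states chosen by a run at that position; concretely, for each $g \in \Gamma^{(k)}$ and each $(\overbar p, q) \in Q^k \times Q$ put a symbol $\langle g, \overbar p, q\rangle$ of rank $k$ into $\Lambda^{(k)}$. The linear (in fact, relabeling) homomorphism $h: T_\Lambda \to T_\Gamma$ sends $\langle g, \overbar p, q\rangle$ to $g(x_1,\ldots,x_k)$, i.e. it just forgets the run annotation; this is linear and non-deleting. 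The regular tree language $L \subseteq T_\Lambda$ consists of those annotated trees that are \emph{consistent runs}: the $q$ recorded at a node must match the $i$-th component $p_i$ recorded at its parent's symbol (and the root is unconstrained, or we can push root weights into $\A_w$ as well); this is clearly recognizable by a finite bottom-up tree automaton, so $\chr L$ is \wfta-recognizable. The one-state \wfta $\A_w$ assigns to the symbol $\langle g, \overbar p, q\rangle$ of rank $k$ the weight $\delta_g(\overbar p, q)$ on its unique transition, and assigns root weight according to $\lambda$ of the state recorded at the root — so that $\bhv{\A_w}(t')$ is exactly the product of all transition weights along the annotated tree $t'$ times the appropriate root weight. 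Then for a consistent run $t'$ over input tree $t = h(t')$, $(\bhv{\A_w} \odot \chr L)(t') = \bhv{\A_w}(t')$ is the weight of that run, and for inconsistent $t'$ it is $0$. Pushing through $h$ sums over all $t'$ with $h(t') = t$, i.e. over all runs on $t$, which by definition of $\bhv{\A}$ yields $\bhv{\A}(t) = s(t)$. Putting the root-weight handling cleanly into the one-state machine (rather than $\A_w$ genuinely being state-free) is the one place requiring care; since a "one state" \wfta still has a root weight function $\lambda: Q \to S$ with $|Q| = 1$, the root weight must be a single constant, so instead I would add an extra unary "root marker" layer to $\Lambda$ (or let $L$ and $\A_w$ absorb $\lambda$ via an auxiliary top symbol) — this is routine bookkeeping.

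The main obstacle is precisely this last point: the statement insists $\A_w$ has \emph{exactly one state}, so all input-independent weight information — including $\lambda$ — must be encoded through the alphabet $\Lambda$ and the language $L$ rather than through states of $\A_w$. The clean fix is to let $\Lambda$ carry, at the root only, the chosen root state $q$ as part of the top symbol, let $L$ enforce that this matches the run, and let $\A_w$ multiply in $\lambda(q)$ as the weight contributed by that top symbol (with root weight of $\A_w$ itself equal to $1$). With that device in place, both directions go through, and one should also remark that the authors already cite this as "easy to prove" given Lemma \ref{lmm:TcluioH}, so the expected proof in the paper is the short argument of the first direction plus this standard annotation construction for the second.
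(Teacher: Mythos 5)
Your proposal is correct and follows essentially the same route as the paper's (cited) proof: the $\Leftarrow$ direction via recognizability of $\chr{L}$, closure under Hadamard product, and Lemma \ref{lmm:TcluioH}, and the $\Rightarrow$ direction by taking $\Lambda$ to be the transitions of $\A$, $h$ the relabeling to $\Gamma$, $L$ the language of consistent runs, and $\A_w$ the one-state automaton applying the transition weights. Your extra care about absorbing the root weights $\lambda$ into the top symbol is exactly the kind of bookkeeping the cited construction performs, so nothing is missing.
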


\begin{proof}  A proof can be found in \cite{DBLP:conf/birthday/DrosteG17}. There, the $\Rightarrow$-direction is proved based on a \wfta $\A$ recognizing $s$. The components $\Lambda$, $h$, $L$, and $\A_w$ are chosen to be the set of transition in $\A$, the mapping to the letters in $\Gamma$, the tree language of runs in $\A$, and an automaton adding weights to the transitions, respectively. This yields the desired equation. See Appendix in the long version for a proof of correctness of this construction.
\end{proof}

Based on this result and Theorem \ref{thr:WafaIffWfta} a characterization of \wafa via a Nivat-like Theorem is immediate.

\begin{thr}[Nivat-like theorem for \wafa]\label{thr:nivatWAFA} A weighted language $s \in \ser{S}{\Sigma}$ is recognized by a \wafa if and only if there exist a rank $r\in \N$, a ranked alphabet $\Lambda$, a linear tree homomorphism $h:T_\Lambda\rightarrow T_{\Sigma^r_\#}$, a regular tree language $L \subseteq T_\Lambda$, and a \wfta $\A_w$ with exactly one state. And for all $w \in \Sigma^*$, it holds:\[s(w) = h(\bhv{\A_w} \odot \chr{L})(\tw) \enspace.\]
\end{thr}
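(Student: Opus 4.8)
The plan is to derive this Nivat-like characterization for \wafa by combining our core Theorem \ref{thr:WafaIffWfta} with the Nivat-like theorem for \wfta (Theorem \ref{thr:nivatWFTA}). The strategy is to reduce the statement about words to a statement about their generic tree encodings $\tw$, at which point the tree-automata machinery applies directly.

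For the forward direction, suppose $s \in \ser{S}{\Sigma^*}$ is recognized by a \wafa. By Lemma \ref{lmm:d1WafaIffWfta} there exist a rank $r \in \N$ and a \wfta $\B$ over $\Sigma^r_\#$ with $s = \bhv{\B} \circ h^r$, i.e.\ $s(w) = \bhv{\B}(\tw)$ for all $w \in \Sigma^*$. Now apply Theorem \ref{thr:nivatWFTA} to the weighted tree language $\bhv{\B} \in \ser{S}{T_{\Sigma^r_\#}}$: this yields a ranked alphabet $\Lambda$, a linear tree homomorphism $h : T_\Lambda \rightarrow T_{\Sigma^r_\#}$, a regular tree language $L \subseteq T_\Lambda$, and a one-state \wfta $\A_w$ with $\bhv{\B} = h(\bhv{\A_w} \odot \chr{L})$. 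Substituting $t = \tw$ gives $s(w) = \bhv{\B}(\tw) = h(\bhv{\A_w} \odot \chr{L})(\tw)$ for all $w \in \Sigma^*$, which is exactly the desired equation.

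For the converse, assume we are given $r$, $\Lambda$, a linear tree homomorphism $h : T_\Lambda \rightarrow T_{\Sigma^r_\#}$, a regular tree language $L$, and a one-state \wfta $\A_w$ such that $s(w) = h(\bhv{\A_w} \odot \chr{L})(\tw)$ for all $w \in \Sigma^*$. Set $r_0 := h(\bhv{\A_w} \odot \chr{L}) \in \ser{S}{T_{\Sigma^r_\#}}$. Since $\A_w$ is a \wfta, $\bhv{\A_w}$ is recognizable; since $L$ is a regular tree language, $\chr{L}$ is recognizable; the class of recognizable weighted tree languages is closed under Hadamard product, so $\bhv{\A_w} \odot \chr{L}$ is recognizable, and by Lemma \ref{lmm:TcluioH} (closure under linear homomorphisms) $r_0$ is recognizable by some \wfta $\B$. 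The generic tree homomorphism $h^r : \Sigma^* \rightarrow T_{\Sigma^r_\#}$ is a tree homomorphism (viewing $\Sigma^*$ as $T_{\Sigma^1_\#}$), so Lemma \ref{lmm:d2WafaIffWfta} shows that $\bhv{\B} \circ h^r$ is recognized by a \wafa. But $(\bhv{\B} \circ h^r)(w) = \bhv{\B}(\tw) = r_0(\tw) = s(w)$, so $s$ is recognized by a \wafa.

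The only subtle points are bookkeeping rather than genuine obstacles: one must make sure the Hadamard product of recognizable weighted tree languages is recognizable (a standard product construction on \wfta over a commutative semiring, analogous to the word case), and one must be careful that Lemma \ref{lmm:d2WafaIffWfta} is being applied with $h^r$ in the role of the tree homomorphism $\Sigma^* \rightarrow T_\Gamma$ — which is legitimate since $h^r$ is literally the generic tree homomorphism and $\Sigma^* = T_{\Sigma^1_\#}$ under our identification. I expect the main thing to state carefully is the recognizability of $r_0$ via the chain "\wfta $\oplus$ regular tree language $\to$ Hadamard product $\to$ linear homomorphism (Lemma \ref{lmm:TcluioH})", after which the \wafa is produced mechanically by Lemma \ref{lmm:d2WafaIffWfta}.
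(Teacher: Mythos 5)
Your proposal is correct and follows essentially the same route as the paper: Lemma \ref{lmm:d1WafaIffWfta} plus Theorem \ref{thr:nivatWFTA} for the forward direction, and Theorem \ref{thr:nivatWFTA} (whose backward direction you simply unfold into the Hadamard-product and Lemma \ref{lmm:TcluioH} closure steps) plus Lemma \ref{lmm:d2WafaIffWfta} for the converse. The only difference is that the paper cites the backward direction of Theorem \ref{thr:nivatWFTA} wholesale where you re-derive it, which changes nothing of substance.
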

\begin{proof}
$\Rightarrow$: Let $\A$ be a nice, equalized \wafa such that $\bhv{\A}=s$. Due to Lemma \ref{lmm:d1WafaIffWfta}, $r\in\N$ and a \wfta $\B$ exist such that $s(w)=\bhv{\B}\circ h^r(w)=\bhv{\B}(t_w^r)$. Applying Theorem \ref{thr:nivatWFTA} to $\bhv{\B}$ gives us the desired result.

$\Leftarrow$: By Theorem \ref{thr:nivatWFTA}, there exists a \wfta $\B$ such that $\bhv{\B} = h(\bhv{\A_w}\odot \chr{L})$. Let $h^r:\Sigma^*\rightarrow T_{\Sigma_\#^r}$ be the generic homomorphism. In consequence of Theorem \ref{thr:WafaIffWfta}, a \wafa $\A$ exists such that $\bhv{\A}(w)= \bhv{\B}(h^r(w))=h(\bhv{\A_w}\odot\chr{L})(h^r(w))=h(\bhv{\A_w}\odot\chr{L})(t_w^r)$ for all $w\in \Sigma$. This finishes our proof.
\end{proof}

\section{A logical characterization of \wafa}
Based on Theorem \ref{thr:WafaIffWfta} we are able to give a logical characterization of \wafa (Theorem \ref{thr:WafaIffsrMSO}). For this purpose, we will use the logical characterization by weighted MSO logic for trees which was introduced in \cite{DROSTE2006228}. 

Weighted MSO logic over trees is an extension of MSO logic over trees. It allows for the use of usual MSO formulas, but also incorporates quantitative aspects such as semiring elements and operations, as well as weighted quantifiers. In the end, every weighted MSO formula defines a weighted tree language. More precisely, let $\Gamma$ be a ranked alphabet, each weighted MSO formula $\varphi \in \textup{MSO}(\Gamma,S)$ defines a weighted tree language $\bhv{\varphi}:T_\Gamma \rightarrow S$. Weighted MSO logic is strictly more expressive than \wfta. Nevertheless, it is possible restrict the syntax of weighted MSO in such a way that it characterizes weighted tree languages recognized by \wfta. This fragment is called weighted syntactically restricted MSO (srMSO). Due to a lack of space, we have to omit the formal definition of srMSO. We will use syntax and semantics of weighted srMSO without any changes and refer the interested reader to \cite{HWAc9} or \cite{droste2011weighted}. Our characterization of \wafa will be fully based on the following characterization theorem for \wfta:

\begin{thr}[Theorem 3.49 (A) in \cite{HWAc9}]\label{thr:wftaIffMSO} A weighted tree language $s\in \ser{S}{T_\Gamma}$ is recognized by a \wfta if and only if there exist $\varphi \in \textup{srMSO}(\Gamma,S)$ such that $s=\bhv{\varphi}$.
\end{thr}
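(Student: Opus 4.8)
The plan is to prove this Büchi--Elgot--Trakhtenbrot-style equivalence in the two standard directions. Throughout, a formula with a finite set $\mathcal{V}$ of free first- and second-order variables is interpreted over the extended ranked alphabet $\Gamma_{\mathcal{V}}$, in which a $\mathcal{V}$-assignment is encoded into the node labels, and one restricts attention to \emph{valid} encodings, i.e.\ those trees in which every first-order variable marks exactly one position. The set of valid encodings is a recognizable boolean tree language, so one may always pass from a series over $\Gamma_{\mathcal{V}}$ to its Hadamard product with the characteristic series of this language without leaving the recognizable class.

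For the direction from automata to logic, given a \wfta $\A=\wftaa$ I would build an srMSO sentence that sums, over all runs, the product of the transition weights times the root weight. Introduce one second-order variable $X_q$ for each $q\in Q$, so that a run is encoded by the tuple $(X_q)_{q\in Q}$, and write a boolean MSO formula $\psi_{\mathrm{run}}$ asserting that these sets partition the positions and are locally consistent with $\delta$. For the quantitative part, place over each position the scalar $\delta_g(\overbar{p},q)$ selected by the local labelling and, at the root, the scalar $\lambda$; a weighted universal first-order quantifier $\bigotimes_x$ forms the product of the transition factors, and a weighted existential second-order quantifier $\bigoplus_{(X_q)}$ sums over all runs, yielding
\[
\varphi \;=\; \bigoplus_{(X_q)_{q\in Q}}\Big(\psi_{\mathrm{run}}\,\cdot\,(\text{root factor})\,\cdot\,\bigotimes_{x}(\text{transition factor})\Big)
\]
with $\bhv{\varphi}=\bhv{\A}$. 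The essential check is that $\varphi$ lies in the restricted fragment: the subformula under $\bigotimes_x$ is a recognizable step formula (finitely many semiring values selected by boolean guards), each conjunction pairs a boolean guard with a step/weighted formula, and no weighted universal second-order quantifier is used.

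For the converse I would proceed by structural induction on $\varphi\in\textup{srMSO}(\Gamma,S)$, showing that $\bhv{\varphi}$ is recognized by a \wfta. Atomic formulas are immediately recognizable; disjunction corresponds to the sum of recognizable series and (restricted) conjunction to their Hadamard product; the weighted existential quantifiers $\bigoplus_x$ and $\bigoplus_X$ correspond to projection along the alphabetic tree homomorphism that deletes a variable marking from the labels, and closure under this projection is exactly a linear-homomorphism image as in Lemma~\ref{lmm:TcluioH} (after first intersecting with the valid-encoding language so that a first-order projection sums over exactly one marked position).

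The main obstacle is the weighted universal first-order quantifier $\bigotimes_x$: unlike the sum-like operations, the product over all positions does not preserve recognizability for arbitrary subformulas, and it is precisely to tame this that the srMSO restriction confines $\bigotimes_x$ to almost-boolean (step) subformulas. For such a $\psi$, whose semantics takes finitely many values $s_1,\dots,s_m$ on pairwise disjoint recognizable boolean domains $L_1,\dots,L_m$, I would realize $\bigotimes_x\bhv{\psi}$ by a direct bottom-up construction: a \wfta that at each node multiplies in the unique scalar $s_j$ selected by that node's $L_j$-membership, so the run weight equals the product of the per-position scalars. Assembling these closure steps along the induction produces a \wfta for every srMSO formula, and in particular for every sentence, which completes the equivalence.
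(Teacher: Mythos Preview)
The paper does not prove this theorem at all: it is quoted verbatim as Theorem~3.49~(A) from \cite{HWAc9} and used as a black box, with the explicit remark that even the formal definition of srMSO is omitted and the reader is referred to \cite{HWAc9} or \cite{droste2011weighted}. There is therefore no ``paper's own proof'' to compare against.

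That said, your sketch is the standard two-direction argument one finds in the cited sources (originating with Droste and Vogler \cite{DROSTE2006228}): encode runs by a tuple of second-order variables and read off an srMSO sentence for the automaton-to-logic direction, and for the converse do a structural induction using closure of recognizable tree series under sum, Hadamard product with recognizable step functions, linear relabelling/projection for $\bigoplus$, and the step-function restriction to handle $\bigotimes_x$. The outline is correct in spirit; the only point where a referee would press you is the treatment of negation/boolean fragment (you do not mention it explicitly, but it is needed to get the guards $L_j$ recognizable) and the precise formulation of the syntactic restriction on conjunction, which in the srMSO of \cite{HWAc9} allows the product of two arbitrary restricted formulas only when at least one factor is almost boolean. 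None of this is a gap in your plan, merely places where the write-up would need the exact definitions the present paper deliberately suppresses.
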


However, we still have to handle the homomorphism used in Theorem \ref{thr:WafaIffWfta}. This will be done by choosing an appropriate way of representing words as relational structures.

By definition $\bhv{\varphi}\in \ser{S}{T_\Gamma}$ for all  $\varphi \in \textup{srMSO}(\Gamma,S)$. However, we want to use weighted srMSO on trees to define weighted languages on words. To this end, we define $\bhv{\varphi}_\Sigma\in \ser{S}{T_\Gamma}$ by $\bhv{\varphi}_\Sigma(w)=\bhv{\varphi}({t_w^{\rank(\Gamma)}})$ for all $\varphi \in \textup{srMSO}(\Gamma,S)$, $w\in \Sigma^*$. Since $\textup{srMSO}(\Gamma,S)\subseteq \textup{srMSO}(\Gamma\cup \Sigma_\#^{\rank(\Gamma)},S)$, we can assume without loss of generality $\varphi\in \textup{srMSO}(\Gamma\cup \Sigma_\#^{\rank(\Gamma)},S)$. Hence, $\bhv{\varphi}_\Sigma$ is well defined for all $\Sigma$. It is easy to see that $\bhv{\varphi}_\Sigma= \bhv{\varphi}\circ h$ where $h:\Sigma \rightarrow \Sigma_\#^{\rank(\Gamma)} \cup \Gamma$  is the generic homomorphism.

\begin{thr}\label{thr:WafaIffsrMSO} A weighted language $s\in \ser{S}{\Sigma}$ is recognized by a \wafa if and only if there exist a ranked alphabet $\Gamma$ and $\varphi \in \textup{srMSO}(\Gamma,S)$ such that $s=\bhv{\varphi}_\Sigma$.
\end{thr}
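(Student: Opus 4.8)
The plan is to derive Theorem~\ref{thr:WafaIffsrMSO} as a direct consequence of the tree-automaton characterization Theorem~\ref{thr:WafaIffWfta} together with the logical characterization of \wfta (Theorem~\ref{thr:wftaIffMSO}) and the bridging observation that $\bhv{\varphi}_\Sigma = \bhv{\varphi}\circ h$ for the generic homomorphism $h$. Both directions are short once these ingredients are lined up, so the proof is essentially a matter of chasing the right alphabets and ranks.

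For the $\Rightarrow$-direction, suppose $s\in\ser{S}{\Sigma^*}$ is recognized by a \wafa. By Lemma~\ref{lmm:d1WafaIffWfta} there are a rank $r\in\N$ and a \wfta $\B$ over $\Sigma^r_\#$ with $s = \bhv{\B}\circ h^r$, i.e.\ $s(w)=\bhv{\B}(t^r_w)$ for all $w$. Set $\Gamma = \Sigma^r_\#$; note $\rank(\Gamma)=r$. Applying Theorem~\ref{thr:wftaIffMSO} to the weighted tree language $\bhv{\B}\in\ser{S}{T_\Gamma}$ yields a formula $\varphi\in\textup{srMSO}(\Gamma,S)$ with $\bhv{\varphi}=\bhv{\B}$. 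Then for all $w\in\Sigma^*$ we have $\bhv{\varphi}_\Sigma(w) = \bhv{\varphi}(t^{\rank(\Gamma)}_w) = \bhv{\varphi}(t^r_w) = \bhv{\B}(t^r_w) = s(w)$, so $s = \bhv{\varphi}_\Sigma$ as required. (One should double-check that the ``generic homomorphism'' used to define $\bhv{\varphi}_\Sigma$ is indeed $h^r$ in this case, which it is since $\rank(\Gamma)=r$ and $h^r$ sends $w\mapsto t^r_w$.)

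For the $\Leftarrow$-direction, suppose $s = \bhv{\varphi}_\Sigma$ for some ranked alphabet $\Gamma$ and $\varphi\in\textup{srMSO}(\Gamma,S)$. Put $r=\rank(\Gamma)$. Using the remark preceding the theorem, $\bhv{\varphi}_\Sigma = \bhv{\varphi}\circ h$ where $h:\Sigma^*\rightarrow T_{\Sigma^r_\#\cup\Gamma}$ is the generic homomorphism of rank $r$; more to the point, $s(w)=\bhv{\varphi}(t^r_w)$. By Theorem~\ref{thr:wftaIffMSO} there is a \wfta $\A$ (over $\Gamma\cup\Sigma^r_\#$, or a ranked alphabet containing all symbols occurring in the trees $t^r_w$) with $\bhv{\A}=\bhv{\varphi}$. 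Since $h^r:\Sigma^*\rightarrow T_{\Sigma^r_\#}$ is a tree homomorphism and $s = \bhv{\A}\circ h^r$, Lemma~\ref{lmm:d2WafaIffWfta} (or, equivalently, the $\Leftarrow$-direction of Theorem~\ref{thr:WafaIffWfta}) gives a \wafa recognizing $s$. This completes both directions.

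The only genuine subtlety — and the step I expect to require the most care — is bookkeeping of the underlying ranked alphabet and the role of the end-symbol $\#$: the trees $t^r_w$ live over $\Sigma^r_\#$, whereas $\varphi$ is stated over $\Gamma$, and the definition of $\bhv{\varphi}_\Sigma$ implicitly enlarges the alphabet to $\Gamma\cup\Sigma^{\rank(\Gamma)}_\#$ (using $\textup{srMSO}(\Gamma,S)\subseteq\textup{srMSO}(\Gamma\cup\Sigma^{\rank(\Gamma)}_\#,S)$). One has to make sure that in the $\Rightarrow$-direction the $\Gamma$ produced is a ranked alphabet for which $\rank(\Gamma)$ equals the rank $r$ coming out of Lemma~\ref{lmm:d1WafaIffWfta} (taking $\Gamma=\Sigma^r_\#$ does this), and that in the $\Leftarrow$-direction the \wfta from Theorem~\ref{thr:wftaIffMSO} can be read as a \wfta over the ranked alphabet $\Sigma^r_\#$ whose behaviour on trees of the form $t^r_w$ is what matters (symbols of $\Gamma$ not appearing in any $t^r_w$ are simply irrelevant). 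Once these identifications are made explicit, everything else is immediate from the cited theorems.
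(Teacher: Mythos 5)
Your proof is correct and follows essentially the same route as the paper: Lemma~\ref{lmm:d1WafaIffWfta} plus Theorem~\ref{thr:wftaIffMSO} for the forward direction, and Theorem~\ref{thr:wftaIffMSO} plus Lemma~\ref{lmm:d2WafaIffWfta} for the converse, with the identity $\bhv{\varphi}_\Sigma=\bhv{\varphi}\circ h^{\rank(\Gamma)}$ as the bridge. Your extra care about the alphabet bookkeeping (choosing $\Gamma=\Sigma^r_\#$ and handling $\Gamma\cup\Sigma^{\rank(\Gamma)}_\#$) matches the conventions the paper sets up just before the theorem.
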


\begin{proof} $\Rightarrow$: Assume $s\in \ser{S}{\Sigma}$ is recognized by a \wafa. By \ref{lmm:d1WafaIffWfta}, there exists $r\in \mathbb{N}$ and a \wfta $\mathcal{B}$ such that $s=\bhv{\mathcal{B}}\circ h^r$. By Theorem \ref{thr:wftaIffMSO}, $\varphi \in \textup{srMSO}(\Sigma_\#^r,S)$ exists such that $\bhv{\mathcal{B}}=\bhv{\varphi}$. Thus $s=\bhv{\mathcal{B}}\circ h^r=\bhv{\varphi}\circ h^r=\bhv{\varphi}_\Sigma$.

$\Leftarrow$: If $s=\bhv{\varphi}_\Sigma$ for some $\varphi\in \textup{srMSO}(\Gamma\cup \Sigma_\#^{\rank(\Gamma)},S)$, we get $s = \bhv{\varphi}\circ h^{\rank(\Gamma)}$. By Theorem \ref{thr:wftaIffMSO}, a \wfta $\mathcal{B}$ exists such that $\bhv{\varphi}=\bhv{\mathcal{B}}$. Therefore $s=\bhv{\mathcal{B}}\circ h^{\rank(\Gamma)}$. Since $\mathcal{B}$ is a \wfta and $h^{\rank(\Gamma)}$ a homomorphism, a \wafa $\mathcal{A}$ with $s=\bhv{A}$ exists by Lemma \ref{lmm:d2WafaIffWfta}.
\end{proof}

\section{Closure of \wfta under inverses of homomorphisms}
It is a well known fact that regular tree languages are closed under inverses of homomorphisms. Sadly, this is not true in the weighted case, at least not for arbitrary semirings. This raises the question if it is possible to give a precise description of the class of semirings $\Se$ for which \wfta are closed under inverses of homomorphisms. This question will be answered by Theorem \ref{thr:clOTAUIHom}.

Theorem \ref{thr:nivatWAFA} and Theorem \ref{thr:wftaIffMSO} used Theorem \ref{thr:WafaIffWfta} to apply known results for \wfta to \wafa. Vice versa, we can use Theorem \ref{thr:WafaIffWfta} and Theorem \ref{thr:WAFAiffWFAiffLocFin} to characterize $\Se$.

\begin{thr}\label{thr:clOTAUIHom} The class of $S$-weighted tree languages recognized by $\wfta$ is closed under inverses of homomorphisms if and only if $S$ is locally finite.
\end{thr}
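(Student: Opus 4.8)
The plan is to prove both directions by reduction to the known characterization for \wafa versus \wfa, namely Theorem \ref{thr:WAFAiffWFAiffLocFin}, using the bridge Theorem \ref{thr:WafaIffWfta} between \wafa and (\wfta\ $+$ tree homomorphism).

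For the ($\Leftarrow$) direction, assume $S$ is locally finite. Let $h:T_\Gamma\rightarrow T_\Lambda$ be a tree homomorphism and $s\in\ser{S}{T_\Lambda}$ recognized by a \wfta. I would first reduce to the case where $h$ is a tree homomorphism whose letter-images have already been ``disambiguated'' in the style of the proof of Lemma \ref{lmm:d2WafaIffWfta}; the point is that $h^{-1}(s)(t)=s(h(t))$, so $h^{-1}(s)=\bhv{\A}\circ h$ for the \wfta $\A$ recognizing $s$. By the construction in Lemma \ref{lmm:d2WafaIffWfta} (applied with word alphabet replaced by a ranked alphabet, i.e. the proof there really builds a \wafa from an arbitrary tree homomorphism composed with a \wfta), $\bhv{\A}\circ h$ is recognized by a \wafa over $\Gamma$. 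Since $S$ is locally finite, Theorem \ref{thr:WAFAiffWFAiffLocFin} tells us every \wafa-recognizable weighted language is \wfa-recognizable — and the tree analogue of that theorem (the version of Theorem \ref{thr:WAFAiffWFAiffLocFin} for trees, which is also in \cite{KOSTOLANYI20181}) gives that $h^{-1}(s)$ is \wfta-recognizable. I expect the main subtlety here to be stating the tree-version of Theorem \ref{thr:WAFAiffWFAiffLocFin} cleanly, or alternatively routing everything through the word case by encoding trees as words; this bookkeeping is the chief obstacle.

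For the ($\Rightarrow$) direction, I would argue contrapositively: suppose $S$ is \emph{not} locally finite. Then by Theorem \ref{thr:WAFAiffWFAiffLocFin} there is a weighted language $s_0\in\ser{S}{\Sigma^*}$ recognized by a \wafa but by no \wfa. By Lemma \ref{lmm:d1WafaIffWfta} write $s_0=\bhv{\B}\circ h^r$ for some \wfta $\B$ over $\Sigma^r_\#$ and the generic tree homomorphism $h^r:\Sigma^*\rightarrow T_{\Sigma^r_\#}$. Now view $h^r$ as (essentially) a tree homomorphism $T_{\Sigma^1_\#}\rightarrow T_{\Sigma^r_\#}$ via the identification of $\Sigma^*$ with $T_{\Sigma^1_\#}$. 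If \wfta\ were closed under inverses of homomorphisms, then $(h^r)^{-1}(\bhv{\B})=\bhv{\B}\circ h^r$ would be a \wfta-recognizable weighted tree language over $\Sigma^1_\#$, hence (by the classical equivalence of \wfa\ and \wfta\ over monadic trees, recalled in Section 4) a \wfa-recognizable weighted language over $\Sigma$. That contradicts the choice of $s_0$. Thus closure under inverses of homomorphisms fails, completing the contrapositive.

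The one technical point to nail down is that the generic homomorphism $h^r$ genuinely is an instance of the homomorphisms covered by ``closure under inverses of homomorphisms'' for \wfta — i.e. that restricting the domain to $T_{\Sigma^1_\#}$ and treating monadic trees as words is legitimate — and, symmetrically in the ($\Leftarrow$) direction, that Lemma \ref{lmm:d2WafaIffWfta}'s construction applies to an arbitrary tree homomorphism $h:T_\Gamma\rightarrow T_\Lambda$ and not only to one factoring through a generic homomorphism. Both are essentially already contained in the proofs cited (the proof of Lemma \ref{lmm:d2WafaIffWfta} is written for arbitrary tree homomorphisms after the disambiguation trick), so the write-up will mostly consist of quoting Theorem \ref{thr:WafaIffWfta}, Lemma \ref{lmm:d1WafaIffWfta}, Lemma \ref{lmm:d2WafaIffWfta}, and Theorem \ref{thr:WAFAiffWFAiffLocFin} in the right order, with the $1$-ary/word identification handled carefully.
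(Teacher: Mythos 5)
Your ($\Rightarrow$) direction is sound and is essentially the paper's own argument in contrapositive form: both hinge on viewing the generic homomorphism $h^r$ as a tree homomorphism from monadic trees to $T_{\Sigma^r_\#}$, pulling $\bhv{\B}\circ h^r$ back to a \wfta over $\Sigma^1_\#$, identifying that with a \wfa, and invoking Theorem \ref{thr:WAFAiffWFAiffLocFin}.

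The ($\Leftarrow$) direction, however, has a genuine gap. You want to apply the construction of Lemma \ref{lmm:d2WafaIffWfta} to an arbitrary tree homomorphism $h:T_\Gamma\rightarrow T_\Lambda$ with $\Gamma$ a genuinely ranked alphabet, and then appeal to a ``tree analogue of Theorem \ref{thr:WAFAiffWFAiffLocFin}.'' But Lemma \ref{lmm:d2WafaIffWfta} only produces a \wafa over a \emph{word} alphabet --- the notion of \wafa in this paper is defined on words only, and the object your adapted construction would output when the domain is $T_\Gamma$ for non-monadic $\Gamma$ would have to be a weighted \emph{alternating tree} automaton, a device that is neither defined here nor covered by Theorem 7.1 of \cite{KOSTOLANYI20181} (a statement about word automata). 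So the key step ``over locally finite $S$, the alternating device collapses to a \wfta, hence $h^{-1}(s)$ is \wfta-recognizable'' rests on an unproved theorem; and ``routing everything through the word case by encoding trees as words'' is not mere bookkeeping, since inverse tree homomorphisms do not reduce to inverse word homomorphisms in any evident way. The paper sidesteps all of this: over a locally finite semiring every recognizable weighted tree language is a recognizable step function $\sum_{i=1}^k l_i\cdot\chr{L_i}$ with $L_1,\ldots,L_k$ a partition of $T_\Lambda$ into recognizable tree languages (Lemmas \ref{lmm:auxfD1} and \ref{lmm:auxfD2}); then $s\circ h=\sum_{i=1}^k l_i\cdot\chr{h^{-1}(L_i)}$, and the classical, unweighted closure of recognizable tree languages under inverses of homomorphisms, together with Lemma \ref{lmm:auxfD2} again, finishes the argument. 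You should replace your ($\Leftarrow$) argument with this step-function argument, or else actually state and prove the weighted alternating tree automaton machinery you are implicitly assuming.
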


To prove this result we will use the notion of \textit{recognizable step functions}: A weighted tree language $r\in \ser{S}{T_\Gamma}$ is a recognizable step function if there exist recognizable tree languages $L_1,\ldots, L_k$ and values $l_1,\ldots, l_k\in S$ such that $r=\sum_{i=1}^k l_i\cdot\chr{L_i}$. Due to \cite{DROSTE2006228}, we know the following about recognizable step functions:

\begin{lmm}[Lemma 3.1 in \cite{DROSTE2006228}]\label{lmm:auxfD1} If $r\in \ser{S}{T_\Gamma}$ is a recognizable step function, then a partition $L_1,\ldots, L_k$ of $T_\Gamma$ exits such that $r=\sum_{i=1}^k l_i\cdot\chr{L_i}$ for some $l_1,\ldots,l_k \in S$.
\end{lmm}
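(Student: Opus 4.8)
The plan is to start from \emph{any} representation of $r$ as a sum of characteristic functions of recognizable tree languages and refine it into the Boolean atoms generated by those languages. By definition of recognizable step function, I may fix recognizable tree languages $L_1,\ldots,L_k\subseteq T_\Gamma$ and values $l_1,\ldots,l_k\in S$ with $r=\sum_{i=1}^k l_i\cdot\chr{L_i}$; the only defect is that the $L_i$ need not be disjoint. To repair this, for every index set $I\subseteq\{1,\ldots,k\}$ I would define the atom
\[
C_I=\bigcap_{i\in I}L_i\;\cap\;\bigcap_{i\notin I}\bigl(T_\Gamma\setminus L_i\bigr)\enspace.
\]

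Three observations then finish the argument. First, the family $(C_I)_{I\subseteq\{1,\ldots,k\}}$ is a partition of $T_\Gamma$: any two distinct atoms are disjoint because they disagree on membership in some $L_i$, and every tree $t\in T_\Gamma$ lies in exactly the atom $C_{I(t)}$ for $I(t)=\{i\mid t\in L_i\}$, so the atoms cover $T_\Gamma$. Second, each $C_I$ is itself a recognizable tree language, since recognizable tree languages are closed under complement and finite intersection (a classical fact about recognizable/regular tree languages, as used throughout the theory of \wfta). Third, $r$ is constant on each atom: for $t\in C_I$ we have $t\in L_i$ exactly when $i\in I$, whence $r(t)=\sum_{i\in I}l_i=:m_I$, using that $(S,+,0)$ is a commutative monoid so the finite sum is well defined (with $m_\emptyset=0$).

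Combining these, $r=\sum_{I\subseteq\{1,\ldots,k\}} m_I\cdot\chr{C_I}$, and after discarding the empty atoms and relabelling the remaining nonempty ones as $L_1,\ldots,L_{k'}$ with values $m_I=:l_1,\ldots,l_{k'}$, this is precisely a representation of $r$ as a sum over a partition of $T_\Gamma$ into recognizable tree languages, as required. I do not expect a genuine obstacle here; the argument is a standard Boolean-atom refinement. The one point deserving care is the invocation of closure of recognizable tree languages under complement and intersection, which I would state explicitly (and cite as the analogue of the word case) rather than leave implicit, and the bookkeeping that $C_\emptyset$ may be retained in the partition with value $0$ so that the $C_I$ literally cover all of $T_\Gamma$.
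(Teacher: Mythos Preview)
Your argument is correct and is exactly the standard Boolean-atom refinement used to prove this statement; in fact the paper does not supply its own proof at all but simply cites the result as Lemma~3.1 of \cite{DROSTE2006228}, where the same partition-into-atoms idea is used. There is nothing to add beyond your own caveat about explicitly invoking the closure of recognizable tree languages under complement and finite intersection.
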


Note, this lemma is not redundant since the definition of recognizable step functions does not demand that the recognizable tree languages are pairwise disjoint. Due to Lemma \ref{lmm:auxfD1}, we know that a weighted tree language is a recognizable step function if and only if it has a finite image and each preimage is a recognizable tree language. The next Lemma characterizes recognizable weighted tree languages over locally finite semirings.

\begin{lmm}[Lemma 3.3 \& Lemma 6.1 in \cite{DROSTE2006228}]\label{lmm:auxfD2} Let $S$ be locally finite. A weighted tree language $r\in \ser{S}{T_\Gamma}$ is recognizable if and only if $r$ is a recognizable step function.
\end{lmm}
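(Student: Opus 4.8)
The plan is to prove both implications, noting that local finiteness is only needed for the forward direction. I would first dispose of the ($\Leftarrow$) direction, which holds over an arbitrary commutative semiring. Here I would use that the class of weighted tree languages recognized by \wfta is closed under sum (by the disjoint-union construction on \wfta, adding the two root weight functions) and under multiplication by a scalar $l\in S$ (by scaling the root weight function $\lambda$), together with the fact that $\chr{L}$ is recognizable whenever $L$ is a recognizable tree language — one simply equips a finite tree automaton for $L$ with Boolean weights $\{0,1\}\subseteq S$, reading transitions as weight $1$ and missing transitions as weight $0$. Consequently every recognizable step function $r=\sum_{i=1}^k l_i\cdot\chr{L_i}$ is recognized by some \wfta.

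For the ($\Rightarrow$) direction, let $\A=\wftaa$ with $Q=\{q_1,\ldots,q_n\}$ recognize $r$. First I would collect the finite set $X$ of all semiring elements occurring in $\A$, that is, all transition weights $\delta_g(\overbar{p},q)$ and all root weights $\lambda(q_i)$, together with $0$ and $1$. Since $S$ is locally finite, the generated subsemiring $S'=\langle X\rangle$ is finite. A straightforward structural induction on $t$ using the defining recursion $\bhvs{\A}(q,g(t_1,\ldots,t_k))=\sum_{\overbar{p}\in Q^k}\delta_g(\overbar{p},q)\cdot\prod_{i=1}^k\bhvs{\A}(p_i,t_i)$ shows $\bhvs{\A}_q(t)\in S'$ for all $q\in Q$ and $t\in\tlg$, whence $\bhv{\A}(t)\in S'$ as well; in particular the image of $r$ is finite.

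The crux is a determinization argument. To each tree $t\in\tlg$ I associate its state-behavior vector $v(t)=(\bhvs{\A}_{q_1}(t),\ldots,\bhvs{\A}_{q_n}(t))\in (S')^n$. The recursion above shows that $v(g(t_1,\ldots,t_k))$ depends only on $g$ and on the vectors $v(t_1),\ldots,v(t_k)$, so the assignment $t\mapsto v(t)$ is precisely the bottom-up run of a \emph{deterministic} finite tree automaton $\mathcal{D}$ whose state set is $(S')^n$. This state set is finite exactly because $S'$ is finite — this is the single point where local finiteness enters, and it is the main obstacle: without it the vectors $v(t)$ could range over an infinite set and no finite deterministic automaton would capture the preimages. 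For each value $s$ in the finite image of $r$, declaring the states $\{v\in (S')^n \mid \sum_{i=1}^n\lambda(q_i)\cdot v_{q_i}=s\}$ accepting turns $\mathcal{D}$ into a finite tree automaton recognizing $r^{-1}(s)=\{t\mid\bhv{\A}(t)=s\}$; hence each such preimage is a recognizable tree language. Writing $r=\sum_{s\in\operatorname{im}(r)} s\cdot\chr{r^{-1}(s)}$ then exhibits $r$ as a recognizable step function, completing the proof.
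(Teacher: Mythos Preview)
The paper does not supply its own proof of this lemma; it is quoted from \cite{DROSTE2006228} (Lemmas~3.3 and~6.1 there) and used as a black box in the proof of Theorem~\ref{thr:clOTAUIHom}. Your argument is correct and is in fact the standard one underlying the cited source: the ($\Leftarrow$) direction is the routine closure of \wfta-recognizable series under sums, scalars, and characteristic series of recognizable tree languages, and the ($\Rightarrow$) direction is exactly the finite-subsemiring determinization (the ``vector of state behaviors'' automaton over $(S')^n$) that one finds in \cite{DROSTE2006228}. So there is no in-paper proof to compare against, but your write-up reproduces the intended approach faithfully.
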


Finally, we can proceed with the proof of Theorem \ref{thr:clOTAUIHom}.

\begin{proof}[Proof of Theorem \ref{thr:clOTAUIHom}]$\Rightarrow$: Assume the class of $S$-weighted tree languages recognized by $\wfta$ is closed under inverses of homomorphisms. 

\begin{clm}\label{clm:hClPauxfD2} The class of $S$-weighted $\Sigma$ languages recognizable by \wafa and the class of $S$-weighted $\Sigma$ languages recognizable by \wfa are equal.
\end{clm}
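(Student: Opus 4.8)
The plan is to prove Claim \ref{clm:hClPauxfD2} by showing the non-trivial inclusion: every $S$-weighted $\Sigma$-language recognized by a \wafa is already recognized by a \wfa, under the hypothesis that \wfta are closed under inverses of homomorphisms. The converse inclusion is immediate since every \wfa is a \wafa by definition. First I would take a weighted language $s$ recognized by a \wafa. By Lemma \ref{lmm:d1WafaIffWfta}, there is a rank $r\in\N$ and a \wfta $\B$ with $s=\bhv{\B}\circ h^r$, where $h^r:\Sigma^*\to T_{\Sigma_\#^r}$ is the generic tree homomorphism.

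The key step is to exhibit a genuine \emph{tree} homomorphism whose inverse applied to $\bhv{\B}$ lands in a setting where recognizability is easy to control. The natural candidate is a homomorphism $g:T_{\Sigma_\#^1}\to T_{\Sigma_\#^r}$ that "fans out" a monadic tree into the full $r$-ary tree $t_w^r$; concretely, on a letter $a\in\Sigma$ (of rank $1$ in $\Sigma_\#^1$) set $g(a)=a(x_1,\ldots,x_1)$ and $g(\#)=\#$. This is a (non-linear, non-deleting) tree homomorphism with $g(t_w^1)=t_w^r$ under the identification of $\Sigma^*$ with $T_{\Sigma_\#^1}$. By the closure hypothesis, $g^{-1}(\bhv{\B})\in\ser{S}{T_{\Sigma_\#^1}}$ is recognized by a \wfta over $\Sigma_\#^1$, and by the correspondence between \wfta over $\Sigma_\#^1$ and \wfa over $\Sigma$ (noted in Section 4), it is recognized by a \wfa. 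But $g^{-1}(\bhv{\B})(t_w^1)=\bhv{\B}(g(t_w^1))=\bhv{\B}(t_w^r)=s(w)$ for all $w$, so under the identification $g^{-1}(\bhv{\B})$ \emph{is} $s$; hence $s$ is \wfa-recognizable. This proves the claim.

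The main obstacle I anticipate is making sure the closure hypothesis is being applied to a bona fide tree homomorphism in the sense of the paper's definition — in particular that $g$ is allowed to be non-linear and that $g^{-1}$ is taken in the sense defined in Section 5 (which was only spelled out for \emph{non-deleting} homomorphisms for the forward direction $h(s)$, while $h^{-1}(s)$ was defined for arbitrary homomorphisms). One must double-check that $h^{-1}(r)(t)=r(h(t))$ is exactly the operation under which closure is assumed, and that $g$ as defined indeed satisfies $g(t_w^1)=t_w^r$; the latter is a routine induction on $|w|$ using $t_w^r=a(t_v^r,\ldots,t_v^r)$ for $w=av$. A secondary subtlety is the identification of $\Sigma^*$ with $T_{\Sigma_\#^1}$ and of \wfa over $\Sigma$ with \wfta over $\Sigma_\#^1$, both of which are explicitly licensed in Section 4, so these should go through without friction. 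Once the claim is established, it combines with Theorem \ref{thr:WAFAiffWFAiffLocFin} to force $S$ to be locally finite, completing the $\Rightarrow$-direction of Theorem \ref{thr:clOTAUIHom}.
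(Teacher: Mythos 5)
Your proof is correct and follows essentially the same route as the paper: obtain $s=\bhv{\B}\circ h^r$ from Lemma \ref{lmm:d1WafaIffWfta} (the paper cites Theorem \ref{thr:WafaIffWfta}), apply the closure-under-inverse-homomorphisms hypothesis to get a \wfta over $\Sigma_\#^1$, and convert it to a \wfa. Your explicit realization of $h^r$ as the (non-linear, non-deleting) tree homomorphism $g:T_{\Sigma_\#^1}\to T_{\Sigma_\#^r}$ with $g(a)=a(x_1,\ldots,x_1)$ is a welcome clarification of a point the paper leaves implicit, but it is not a different argument.
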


Clearly every \wfa is a \wafa. Thus, for the proof of the claim, it remains to show that every weighted language which is recognized by a \wafa is recognized by a \wfa. For this purpose, assume $s\in \ser{S}{\Sigma^*}$ is recognized by a \wafa $\A$. Due to Theorem \ref{thr:WafaIffWfta}, there exists a \wfta $\B$ and a homomorphism $h: \Sigma^* \rightarrow T_\Gamma$ such that $\bhv{\A}=\bhv{B}\circ h$. However, by our assumption there exists a \wfta $\mathcal{C}$ over $\Sigma_\#^1$ such that $\bhv{C}=\bhv{B}\circ h$ and hence a \wfa $\mathcal{C}'$ over $\Sigma$ such that $\bhv{\A}=\bhv{B}\circ h=\bhv{C}=\bhv{C'}$. Thereby, $s$ is recognized by a \wfa. This proves our claim.

By Claim \ref{clm:hClPauxfD2} and Theorem \ref{thr:WAFAiffWFAiffLocFin} it follows that $S$ is locally finite.

$\Leftarrow$: Assume $S$ is locally finite. Let $r\in \ser{S}{T_\Gamma}$ be recognizable and $h:T_\Lambda \rightarrow T_\Gamma$ a homomorphism. Due to Lemma \ref{lmm:auxfD1} and Lemma \ref{lmm:auxfD2}, we have $r=\sum_{i=1}^k l_i\cdot \chr{L_i}$ for some partition $L_1,\ldots,L_k$ of $T_\Gamma$ and values $l_1,\ldots,l_k \in S$. We claim $r\circ h = \sum_{i=1}^k l_i\cdot \chr{h^{-1}(L_i)}$. To prove this, consider some arbitrary $t\in T_\Lambda$. Since the $L_i$ form a partition of $T_\Gamma$ there exists a unique $j\in\{1,\ldots, k\}$ such that $h(t) \in L_i$. Therefore we have\[\begin{array}{cl}&(r\circ h)(t)\\=&\sum_{i=1}^k l_i\cdot\chr{L_i}(h(t))=l_j\\=&l_j\cdot \chr{h^{-1}(L_j)}(t)\\\overset{l_j \text{ unique}}{=}&\sum_{i=1}^k l_i\cdot \chr{h^{-1}(L_i)} (t)\enspace.  \end{array}\] Since recognizable tree languages are closed under inverses of homomorphisms, we know that $h^{-1}(L_1), \ldots,\\ h^{-1}(L_k) \subseteq T_{\Lambda}$ are recognizable. Thus, $r\circ h$ is a recognizable step function. Again, by Lemma \ref{lmm:auxfD2}, we get $r\circ h$ is recognizable. This completes our proof.
\end{proof}

\section{\wafa and polynomial automata}
We will use known results for polynomial automata, to prove the decidability of the Zeroness Problem for \wafa if weights are taken from the rationals (Lemma \ref{crl:DecZerEqWafa}).

Polynomial automata where introduced in \cite{8005101} as a generalization of both vector addition systems and weighted automata. Polynomial automata are quite similar to \wafa, the authors of \cite{8005101} even prove that the characteristic function of the reversal of each language recognized by a non-weighted alternating automaton is recognized by a polynomial automaton of the same size. We want to strengthen this connection. In \cite{8005101} polynomial automata are defined over the rational numbers. However, it is easy to give a more general definition for arbitrary commutative semirings.

A \textit{polynomial automaton} (\pola) is a $5$-tuple $\A=(n,\Sigma, \alpha, p, \gamma)$, where $n\in \N$ is the number of states, $\Sigma$ is an alphabet, $\alpha \in S^n$ is an initial weight vector, $p:\Sigma\rightarrow {\pols{X_n}}^n$ the transition function, and $\gamma\in \pols{X_n}$ an output polynomial. We denote the $i$-th entry of $p(a)$ by $p_i(a)$.

Let $\A=(n,\Sigma, \alpha, p, \gamma)$ be a \pola. Its \textit{state behavior} $\bhvs{\A}:\{1,\ldots, n\}\times \Sigma ^*\rightarrow S$ is the mapping defined by \[\bhvs{\A}(i,w) = \begin{cases}
  \alpha_i &\text{ if } w=\varepsilon,\\
  p_i\big\langle\bhvs{\A}(1,v),\ldots,\bhvs{\A}(n,v)\big\rangle &\text{ if } w=va \text{ for } a\in \Sigma.
\end{cases}\]
Usually we will denote $\bhvs{\A}(i,w)$ by $\bhvs{\A}_i(w)$. Now, the \textit{behavior} of $\A$ is the weighted language $\bhv{A}:\Sigma^*\rightarrow S$ defined by \[\bhv{\A}(w) = \gamma\big(\bhvs{\A}_1(w),\ldots,\bhvs{\A}_n(w)\big).\]

It is easy to check that this definition is a reformulation of the definition found in \cite{8005101}.

Let the reversal of a weighted language $s\in \ser{S}{\Sigma}$ be defined by $s^R(w)=s(w^R)$ for all $w=w_1,\ldots w_n\in \Sigma^*$, where $w^R=w_n\ldots w_1$. Comparing the definition of state behavior for \wafa and \pola already yields the following lemma:

\begin{lmm}\label{lmm:WafaIffRPola} A weighted language $s\in \ser{S}{\Sigma}$ is recognized by a \wafa if and only if $s^R$ is recognized by a \pola.
\end{lmm}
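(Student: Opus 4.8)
The plan is to compare the two state-behavior recursions directly and exhibit a dictionary between a \wafa and a \pola that reverses the direction in which the word is read. The key point is that a \wafa reads a word $w = av$ from the \emph{left}, peeling off the first letter $a$ and substituting the behaviors on the suffix $v$, whereas a \pola reads $w = va$ from the \emph{right}, peeling off the last letter $a$ and substituting the behaviors on the prefix $v$. So the transition polynomial $\delta(q_i,a)$ of a \wafa and the $i$-th transition polynomial $p_i(a)$ of a \pola play exactly the same syntactic role, once we agree to feed the \pola the reversed word. The final weight function $\tau$ of the \wafa corresponds to the initial weight vector $\alpha$ of the \pola (both are the base case, attached to $\varepsilon$), and the initial polynomial $P_0$ of the \wafa corresponds to the output polynomial $\gamma$ of the \pola.

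Concretely, for the $\Rightarrow$ direction I would start from a \wafa $\A=\wafaa$ with $Q=\{q_1,\ldots,q_n\}$ and define the \pola $\A^R=(n,\Sigma,\alpha,p,\gamma)$ by $\alpha_i=\tau(q_i)$, $p_i(a)=\delta(q_i,a)$ (reading $\pols{Q}$ as $\pols{X_n}$ via $q_i\leftrightarrow x_i$), and $\gamma=P_0$. Then I would prove by induction on $|w|$ that $\bhvs{\A}_{q_i}(w)=\bhvs{\A^R}_i(w^R)$ for all $i$ and all $w\in\Sigma^*$. The base case $w=\varepsilon$ is immediate since $w^R=\varepsilon$ and both sides are $\tau(q_i)=\alpha_i$. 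For the step, write $w=av$; then $w^R=v^Ra$, so
\[
\bhvs{\A^R}_i(w^R)=\bhvs{\A^R}_i(v^Ra)=p_i\big\langle\bhvs{\A^R}_1(v^R),\ldots,\bhvs{\A^R}_n(v^R)\big\rangle
=\delta(q_i,a)\big\langle\bhvs{\A}_{q_1}(v),\ldots,\bhvs{\A}_{q_n}(v)\big\rangle=\bhvs{\A}_{q_i}(w),
\]
using the induction hypothesis on $v$ in the middle step. Applying $\gamma=P_0$ to both sides of the equality at $w$ gives $\bhv{\A}(w)=\bhv{\A^R}(w^R)=s^R(w)$, i.e.\ $\bhv{\A^R}=s^R$ as desired. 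For the $\Leftarrow$ direction the construction is completely symmetric: given a \pola recognizing $s^R$, reverse the dictionary to build a \wafa, run the same induction, and conclude $\bhv{\A}=s$.

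I do not expect any genuine obstacle here; the lemma is essentially an observation that two recursions coincide once one is read backwards. The only points requiring a little care are bookkeeping rather than mathematical: first, being explicit about the identification of the polynomial semiring $\pols{Q}$ over the state set $Q=\{q_1,\ldots,q_n\}$ with $\pols{X_n}$, so that $\delta(q_i,a)$ and $p_i(a)$ literally live in the same ring and the substitution notation $\langle\cdot\rangle$ matches up; and second, handling the reversal consistently — a \wafa decomposes $w=av$ while a \pola decomposes $w=va$, so one must keep track of the fact that the suffix of $w$ under the \wafa recursion becomes the prefix of $w^R$ under the \pola recursion, which is exactly what makes $v^Ra=(av)^R$ the right object in the inductive step. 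No result beyond the definitions of \wafa and \pola behavior given above is needed.
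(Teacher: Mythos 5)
Your construction is exactly the one in the paper: the \pola has initial vector $(\tau(q_1),\ldots,\tau(q_n))$, transition polynomials $p_i(a)=\delta(q_i,a)$, and output polynomial $P_0$, with the same induction on $|w|$ showing $\bhvs{\A}_{q_i}(w)=\bhvs{\A^R}_i(w^R)$ and the converse direction handled symmetrically. The proposal is correct and, if anything, spells out the inductive step more explicitly than the paper does.
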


\begin{proof} Assume $s$ is recognized by $\A=\wafaa$. Let $\B=(|Q|,\Sigma, \big(\tau(q_1),\ldots,\tau(q_n)\big), p, P_0)$ be a \pola with $p_i(a)=\delta(q_i,a)$ for all $1\leq i\leq |Q|, a\in \Sigma$. Then, a straightforward induction on $|w|$ shows $\bhv{A}(w)=\bhv{\B}(w^R)$ for all $w\in \Sigma$.
The second direction is proven analogously to the first one.
\end{proof}

Let $\A,\A'$ be two \wafa. We observe $\bhv{\A}(w) = 0$ for all $w\in \Sigma^*$ if and only if $\bhv{\A}(w^R)=0$ for all $w \in \Sigma^*$. Moreover, we have $\bhv{\A}(w) = \bhv{\A'}(w)$ for all $w\in \Sigma^*$ if and only if $\bhv{\A}(w^R)=\bhv{\A'}(w^R)$ for all $w \in \Sigma^*$. This allows us to derive the following corollary from Lemma \ref{lmm:WafaIffRPola}:

\begin{crl}\label{crl:DecZerEqWafa} The Zeroness Problem and the Equivalence Problem for \wafa with weights taken from the rationals are in the complexity class ACKERMANN and hard for the complexity class ACKERMANN.
\end{crl}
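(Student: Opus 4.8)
The plan is to reduce both problems for \wafa directly to the corresponding problems for polynomial automata over $\mathbb{Q}$, for which the ACKERMANN upper and lower bounds are already known from \cite{8005101}. The reduction is essentially free given Lemma \ref{lmm:WafaIffRPola} together with the two observations immediately preceding the corollary. Concretely, I would first recall the statement of the result for \pola: by \cite{8005101}, the Zeroness Problem for polynomial automata over $\mathbb{Q}$ (decide whether $\bhv{\A}(w)=0$ for all $w\in\Sigma^*$) is decidable and complete for the complexity class ACKERMANN, and the same holds for the Equivalence Problem. Since the paper has already reformulated \pola over an arbitrary commutative semiring, I would note that for $S=\mathbb{Q}$ this reformulation is literally the one of \cite{8005101}, so these results apply verbatim.

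Next I would do the membership (upper bound) direction. Given a \wafa $\A$ over $\mathbb{Q}$, Lemma \ref{lmm:WafaIffRPola} produces in linear time a \pola $\B$ with $\bhv{\B}=\bhv{\A}^R$; and $\bhv{\A}\equiv 0$ iff $\bhv{\A}^R\equiv 0$ iff $\bhv{\B}\equiv 0$, the last being an instance of \pola-Zeroness. Hence \wafa-Zeroness over $\mathbb{Q}$ reduces to \pola-Zeroness, placing it in ACKERMANN. For Equivalence, given \wafa $\A,\A'$ over $\mathbb{Q}$ we build the corresponding \pola $\B,\B'$ and use $\bhv{\A}\equiv\bhv{\A'}$ iff $\bhv{\A}^R\equiv\bhv{\A'}^R$ iff $\bhv{\B}\equiv\bhv{\B'}$, an instance of \pola-Equivalence; so \wafa-Equivalence is in ACKERMANN as well.

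For the hardness (lower bound) direction I would argue symmetrically, since Lemma \ref{lmm:WafaIffRPola} is an equivalence: from a \pola $\B$ over $\mathbb{Q}$ I obtain a \wafa $\A$ with $\bhv{\A}^R=\bhv{\B}$, and the same chain of equivalences shows $\bhv{\B}\equiv 0$ iff $\bhv{\A}\equiv 0$, so \pola-Zeroness reduces (in linear, hence appropriately elementary, time) to \wafa-Zeroness; likewise for Equivalence. Since \pola-Zeroness and \pola-Equivalence over $\mathbb{Q}$ are ACKERMANN-hard by \cite{8005101}, so are the corresponding problems for \wafa. One small technical point I would be careful about is that the reductions must be simple enough not to disturb the completeness claim — a linear-time (or at worst polynomial-time) many-one reduction suffices, and the construction in Lemma \ref{lmm:WafaIffRPola} is clearly of this form since it only transposes the transition polynomials and relabels the initial/output data.

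The main obstacle here is not mathematical depth but bookkeeping: one must check that the \pola in \cite{8005101} (defined there over $\mathbb{Q}$) coincides exactly with our $S=\mathbb{Q}$ instance — including the convention that \pola read words right-to-left, which is precisely why the reversal appears — and that the ACKERMANN-completeness statement being cited is for exactly the Zeroness/Equivalence problems as defined here. Once the definitions are aligned, the corollary is an immediate consequence of Lemma \ref{lmm:WafaIffRPola} and the reversal observations, with no further work.
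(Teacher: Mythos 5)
Your proposal is correct and follows exactly the paper's (very terse) argument: combine Lemma \ref{lmm:WafaIffRPola} with the reversal observations stated just before the corollary and the ACKERMANN-completeness results for polynomial automata from \cite{8005101}, using the constructions in both directions as linear-time reductions. The paper's proof is literally a one-line citation; you have simply spelled out the bookkeeping it leaves implicit.
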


\begin{proof} Lemma \ref{lmm:WafaIffRPola} + Theorem 1, Theorem 4, Corollary 1 in \cite{8005101}.
\end{proof}

\section{Conclusion}
We were able to connect WAFA to a variety of formalisms, giving a better understanding of their expressive power and characterizing the class of quantitative languages recognized by WAFA. From here, there are various routes to take. It could be of great practical use to find a logical characterization of WAFA via a linear formalism such as a weighted linear logic, or weighted rational expressions tailored to the expressive power of WAFA. Similar to the work in \cite{8005101}, one could investigate subclasses of WAFA allowing for more efficient decision procedures. Alternatively, one could approach the concept of alternation in weighted automata dealing with more complex structures than words, such as weighted alternating tree automata. And of course, having the universal interpretation of nondeterminism in mind, one may take several of these routes at once!

\nocite{*}
\bibliographystyle{eptcs}
\bibliography{submission34}
\end{document}